\newtheorem{lemma}{Lemma}
\newtheorem{theorem}{Theorem}
\newtheorem{corollary}{Corollary}
\newtheorem{remark}{Remark}
\newtheorem{example}{Example}
\newtheorem{assumption}{Assumption}
\newenvironment{ProblemSpec}[1]{\noindent#1\\}{}
\newcommand{\SigmaP}{\texttt{Sigma}}
\newcommand{\ProblemRS}{\textsf{RS}}
\newcommand{\FLSR}{\textsf{FLSR}}
\newcounter{mmacnt}
\def\restartmma{\setcounter{mmacnt}{0}}
\newenvironment{mma}{
 \par\smallskip
 \catcode`|=\active
 \parskip=0pt\parindent=0pt 
 \small
 \def\In##1\\{%
   \def\linebreak{\hfill\break\null\qquad}%
   \refstepcounter{mmacnt}
   \hangindent=2.5em\hangafter=0
   \leavevmode
   \llap{\tiny\sffamily In[\arabic{mmacnt}]:=\kern.5em}%
   \mathversion{bold}\footnotesize$\displaystyle##1$\normalsize
   \mathversion{normal}\par
 }%
 \def\Print##1\\{%
   \def\linebreak{\hfill\break}%
   \hangindent=2.5em\hangafter=0
   \leavevmode ##1\par}%
 \def\Out##1\\{%
   \def\linebreak{$\hfill\break\null\hfill$}%
   \kern\abovedisplayskip\par
   \hangindent=2.5em\hangafter=0
   \leavevmode
   \llap{\tiny\sffamily Out[\arabic{mmacnt}]=\kern.5em}
   \footnotesize$\displaystyle##1$\normalsize\hfill\null\par
   \kern\belowdisplayskip
 }%
 \def\Warning##1##2\\{%
   \def\linebreak{\hfill\break}%
   \hangindent=2.5em\hangafter=0
   \leavevmode
   {\scriptsize##1 : ##2}\par}%
}{%
 \par\smallskip
}
\let\set\mathbb
\newcommand{\ep}{\varepsilon}
\begin{document}

\begin{frontmatter}

\title{A Symbolic Summation Approach to\\ Feynman Integral Calculus}

\thanks{Supported by the Austrian Science Fund
   (FWF) grants P20162-N18, P20347-N18, DK W1214, the German Research Fund SFB-TR-9, the EU grants TMR
   Network Heptools and PITN-GA-2010-264564.}

\author{Johannes Bl\"umlein}
\address{Deutsches Elektronen-Synchrotron, DESY, Platanenallee 6, D-15738 Zeuthen,
Germany}
\ead{Johannes.Bluemlein@desy.de}

 \author{Sebastian Klein}
 \address{Institut f\"ur Theoretische Teilchenphysik und Kosmologie,
                 RWTH Aachen University, \\ D--52056 Aachen, Germany}
 \ead{sklein@physik.rwth-aachen.de}

\author{Carsten Schneider}
\address{Research Institute for Symbolic Computation, RISC, Johannes
Kepler University Linz, Austria}
\ead{cschneider@risc.jku.at}

\author{Flavia Stan}
\address{Research Institute for Symbolic Computation, RISC, Johannes
Kepler University Linz, Austria}
\ead{fstan@risc.jku.at}

\begin{abstract}
\noindent
Given a Feynman parameter integral, depending on a single discrete variable $N$ and a real
parameter $\varepsilon$, we discuss a new algorithmic framework to compute the first coefficients
of its Laurent series expansion in $\varepsilon$. In a first step, the integrals are expressed by
hypergeometric multi-sums by means of symbolic transformations. Given this sum format,
we develop new summation tools to extract the first coefficients of its series expansion whenever they are expressible in terms of indefinite nested product-sum expressions. In
particular, we enhance the known multi-sum algorithms to derive recurrences for sums with
complicated boundary conditions, and we present new algorithms to find formal Laurent series
solutions of a given recurrence relation.
\end{abstract}

\begin{keyword}
Feynman integrals \sep\ multi-summation \sep\ recurrence solving \sep formal Laurent series


\end{keyword}

\end{frontmatter}

\section{Introduction}

Starting with single summation over hypergeometric terms developed, e.g., in~\cite{Gosper,Zeilberger-identities,Hyper,Abramov:94,GFF} symbolic summation has been intensively enhanced to multi-summation like, e.g., the holonomic approach of~\cite{Zeilberger-holonomic,Chyzak,Schneider:05,Chris-thesis}. In this article we use the techniques of~\cite{Celine,WZtheory} which lead to efficient algorithms developed, e.g., in~\cite{Wegschaider} to compute recurrence relations for hypergeometric multi-sums. Besides this, we rely on multi-summation algorithms presented in~\cite{Sigma-Manual} that generalize the summation techniques worked out in~\cite{AequalB}; the underlying algorithms are based on a refined difference field theory elaborated in~\cite{Schneider:08c,Schneider:11} that is adapted from Karr's $\Pi\Sigma$-fields originally introduced in~\cite{Karr}.

We aim at combining these summation approaches which leads to a new framework to evaluate Feynman integrals. In a nutshell, given a Feynman integral, we transform it to hypergeometric multisums, compute afterwards linear recurrences for these multi-sums, and finally decide constructively by recurrence solving whether the integrals (resp.\ the multisums) have series expansions whose coefficients can be represented in terms of indefinite nested sums and products.
The method consists of a completely algebraic
algorithm. It is therefore well-suited for implementation in computer algebra
systems.

We show in a first step that Feynman parameter
integrals, which contain local operator insertions, in $D$-dimensional Minkowski space with one time- and $(D-1)$
Euclidean space dimensions, $\varepsilon = D - 4$ and $\varepsilon \in {\mathbb R}$ with
$|\varepsilon| \ll 1$, can be transformed by means of symbolic computation to hypergeometric multi-sums ${\cal S}(\ep,N)$ with $N$ an integer parameter. Given this representation,
one can check by analytic arguments whether the integrals can be expanded in a Laurent series w.r.t.\ the parameter $\ep$, and we seek summation algorithms to compute the first few coefficients of this expansion whenever they are representable in terms of indefinite nested sums and products. Due to the special input
class of Feynman integrals, these solutions can be usually transformed to harmonic
sums or $S$-sums; see~\cite{BK,Vermaseren:99,Moch:02,Ablinger}.

In general, we present an algorithm (see Theorem~\ref{Thm:AlgForMultiSum}) that decides constructively, if these first coefficients of the $\ep$--expansion can be written in such indefinite nested product-sum expressions.
Here one first computes a homogeneous recurrence by WZ-theory and Wegschaider's approach. This recurrence together with initial values gives an alternative representation for the series expansion (see Lemma~\ref{Lemma:UniqueSeqSol}). Moreover, we develop a recurrence solver (see Corollary~\ref{Cor:ExpansionSolver}) which computes the coefficients of the expansion in terms of indefinite nested product-sum expressions whenever this is possible. The backbone of this solver relies on algorithms from~\cite{Hyper,Abramov:94,Schneider:01,Schneider-solving}. Since the solutions are highly nested by construction, their simplification to sum representations with minimal depth are crucial; see~\cite{Schneider:11}.

From the practical point of view there is one crucial drawback of the proposed solution: looking for such recurrences is extremely expensive.
For our examples arising from particle physics the proposed algorithm is not applicable considering
the available computer and time resources.
On that score we relax this very restrictive requirement and search for possibly inhomogeneous recurrence relations. However, the input sums have summands which present poles outside
the given summation ranges. Combining Wegschaider's package \texttt{MultiSum} and the new package \texttt{FSums} presented in~\cite{Stan-thesis}
we determine recurrences with inhomogeneous sides consisting of well-defined sums with fewer sum quantifiers. Applying our method to these simpler sums by recursion will lead to an expansion of the right hand side of the starting recurrence. Finally, we compute the coefficients of the original input sum by our new recurrence solver mentioned above.

The outline of the article is as follows. In Section~\ref{Sec:Physic} we explain all computation steps that lead
from Feynman integrals to hypergeometric multi-sums of the form~\eqref{eq:A4} which can be expanded in a Laurent
expansion~\eqref{Equ:LaurentExp} where the coefficients $F_i(N)$ can be represented in the form~\eqref{eq:A5}. In
the beginning of Section~\ref{Equ:FirstApproach} we face the problem that the multi-sums~\eqref{eq:A4} have to be
split further in the form~\eqref{physics-sums-general} to fit the input class of our summation algorithms. We first
discuss convergent sums only. The treatment of
those sums which diverge in this special format or sums with several infinite summations that have difficult
convergence properties will be dealt with later, cf. Remark~\ref{Remark:MethodConclusion}. In the remaining parts
of
Section~\ref{Equ:FirstApproach} we present the general mechanisms to compute the first coefficients $F_i(N)$ for a given hypergeometric multi-sum.
In Section~\ref{Sec:FindRecurrences} we present an algorithmic approach to hypergeometric sums with non-standard boundary conditions. This allows us to generate the inhomogeneous sides of recurrences delivered by Wegschaider's package \texttt{MultiSum}.
Finally, in Section~\ref{Sec:EfficientApproach} we obtain a method that is capable of computing the coefficients $F_i(N)$ in reasonable time. Conclusions are given in Section~\ref{Sec:Conclusion}.

\section{Multiple sum representations of Feynman integrals}\label{Sec:Physic}

\noindent We show how integrals
emerging in renormalizable Quantum Field Theories, like
Quantum Electrodynamics or Quantum Chromodynamics, see e.g. \cite{ANCONT2},
can be transformed by means of symbolic computation to hypergeometric multi-sums.
We study a very general class of Feynman integrals which are of relevance for many
physical processes at high energy colliders, such as the Large Hadron Collider, LHC, and others.

The processes obey special-relativistic kinematics with energy-momentum vectors in Minkowski space,
$\mathbb{M}^D$, see e.g.,~\cite{Naas}, i.e., a $D$-dimensional linear space where
the elements $a=(a_0,\vec{a})\in\mathbb{M}^D$ decompose into the time coordinate $a_0\in\mathbb{R}$ and the spatial coordinates $\vec{a}\in\mathbb{R}^{D-1}$ which form a $D-1$-dimensional Euclidean subspace; the bilinear form is defined by $a.b \equiv \langle a,b \rangle=a_0b_0-\vec{a}\vec{b}\in \mathbb{R}$ for $b=(b_0,\vec{b}) \in \mathbb{M}^D$. Below analytic continuations in $D:=4+\ep$ with $\ep\in\set R$ are considered. Here we study integrals

\vspace*{-0.2cm}

\begin{equation}
{\cal I}(\varepsilon,N,p) = \int \frac{d^D p_1}{(2\pi)^D} \ldots \int \frac{d^D
p_k}{(2\pi)^D}
\frac{{\cal N}(p_1, \ldots p_k; p; M^2;
\Delta, N)}{(-p_1^2 + m_1^2)^{l_1} \ldots (-p_k^2 + m_k^2)^{l_k}}
\prod_V \delta_V
\label{eq:A7}
\end{equation}
with $\Delta, p, p_i \in \mathbb{M}^D$ and $m_i \in \{0,M\}$ for some $M\in\set R$ with $M>0$. The restriction that there is only one mass $M$ is the only one specifying the class of Feynman diagrams from arbitrary ones.
The propagator powers $l_i$ obey $l_i \in \mathbb{N}$ and for the special vector $\Delta$ in (\ref{eq:A7}) one has
$\Delta.\Delta = 0$. The numerator
${\cal N}$ is usually given in terms of finite sums where the range depends on a discrete parameter $N$ and where the summand depends on the scalars $p.p_j, p_i.p_j,
\Delta.p_i$ ($1\leq i,j\leq k$), on $M^2$ and on $N$. In particular, for each $N\in\set N$, ${\cal N}$ is a polynomial in terms of these scalars and $M^2$ where the exponents of the $\Delta.p_j$ ($1\leq j\leq k$) in a given monomial sum up to $N$ and the exponents of the remaining scalars and $M^2$ are constant.
The $\delta_V$ occurring in~\eqref{eq:A7} are shortcuts for Dirac delta functions  in $D$ dimensions  $\delta_V \equiv
\delta^{(D)}\left(\sum_{l=1}^k a_{V,l}
p_l\right), a_{V,l} \in \mathbb Q$. I.e., if $a_{V,i}\neq0$, we get
\vspace*{-0.3cm}

\begin{equation}
\int d^D p_i \delta^{(D)}\left(\sum_{l=1}^k a_{V,l} p_l\right)f(p_i):= \frac{f(p_i)}{|a_{V,i}|}\Big|_{p_i=u}\quad\text{with}\quad u:=-\frac{1}{a_{V,i}}\sum_{l=1,l\neq i}^k a_{V,l}p_l;
\label{eq:del}
\end{equation}
here $f$ stands for the integrand of~\eqref{eq:A7}. For each such rule~(\ref{eq:del}) for the remaining $\delta_V$, one integral sign in (\ref{eq:A7}) can be eliminated. As a consequence we obtain integrals of the same shape but with fewer integral signs.
Such an integral
may be easily linearly transformed
into Euclidean integrals (Wick rotation, \cite{WICK1,WICK2}) in the
Euclidean space by replacing $a=(a_0,\vec{a})\in\mathbb{M}^D$ with $\bar{a}=(i a_0,\vec{a})$. In this way, for $b=(b_0,\vec{b})$ the bilinear form $\langle\bar{a},\bar{b}\rangle=
-a_{0}b_{0}
-\vec{a}. \vec{b} < 0$  obtains a definite sign; $\sqrt{-\langle \bar{a},\bar{a} \rangle}$ is then the
Euclidean norm $||\bar{a}||$. Summarizing, we obtain an Euclidean integral of the same shape as~\eqref{eq:A7} with the Euclidean momenta $\bar{p}_{i},\bar{p}$ (instead of $p_i,p$) and where the denominators can be written in the form $((\sum_{j=1}^kc_j^{(i)}\bar{p}_{j})^2+m_i^2)^{l_i}$ with $c_j^{(i)}\in\mathbb Q$ (instead of $(-p_i^2+m_i^2)^{l_i}$); this format is due to the usage of~\eqref{eq:del}.\\
Subsequently, we show how this Euclidean integral can be mapped to an integral on an $m$-dimensional unit cube. Define $D_i:=(\sum_{j=1}^kc_j^{(i)}\bar{p}_{j})^2+m_i^2$. Then we loop over $r$ ($r=1,2,\dots,k$) as follows. For the $r$th iteration, fix  $\bar{q}:=\bar{p}_{r}$. W.l.o.g.\ assume that $c_r^{(i)}\in\{0,1\}$ for $1\leq i\leq k$. Now collect those denominator factors $D_i^{l_i}$ where $\bar{q}$ occurs, say $\prod_{j=1}^n D_{i_j}^{l_{i_j}}$ ($n\in\set N$). Then we use the formula

\vspace*{-0.5cm}

\begin{equation}
\label{eq:FEYN1}
\frac{1}{\prod_{j=1}^n D_{i_j}^{l_{i_j}}} = \tfrac{\Gamma(l)}{\prod_{j=1}^n\Gamma(l_{i_j})}
\int_0^1 d x_1 \ldots \int_0^1 d x_n \delta\bigg(\sum_{j=1}^n x_j - 1\bigg)
\frac{\prod_{j=1}^n x_j^{{l_{i_j}}-1}}{(x_1 D_{i_1} + \ldots x_{n} D_{i_n})^{l}}
\end{equation}
with $l=\sum_{j=1}^n l_{i_j}$; here $\delta$ is the Dirac delta function, the variables $x_k$ are called Feynman parameters, and
$\Gamma(z)$ denotes the Gamma-function.
Due to the Dirac delta function, we get that $A:=x_1 D_{i_1} + \ldots x_{n} D_{i_j}=
\bar{q}^2+a.\bar{q}+b$ where $a$ and $b$ are expressions free of $\bar{q}$. Hence we can write $A=(\bar{q}+a/2)^2+R$ with $R:=-a^2/4+b$ being free of $\bar{q}$.
Replacing the denominator of our integral by this formula, we can simplify $A$ further.
Namely, using the shift-invariance w.r.t. the vector $\bar{q}$,
which holds in $D$-dimensional Euclidean space, the denominator $A$ can be brought to the form $(\bar{q}^2+R)$ without changing the integral.
Finally, expanding the numerators and applying the $\bar{q}$-integral termwise lead to integrals of the form
$\int \frac{d^D \bar{q}}{(2 \pi)^D}\frac{\prod_{\lambda=1}^m q_{\lambda}.\bar{q}}{(\bar{q}^2+R)^l}$ where the expression $q_{\lambda}$ is free of $\bar{q}$. If $m$ is odd, i.e., an odd number of vector multiplications w.r.t.\ $\bar{q}$ arise, the integral evaluates to $0$ by symmetry. If $m$ is even, one exploits the simplification

\vspace*{-0.2cm}

$$\int \frac{d^D \bar{q}}{(2 \pi)^D}\frac{\prod_{\lambda=1}^{m/2} q_{\lambda}.\bar{q}}{(\bar{q}^2+R)^l}= r(D) \int \frac{d^D k}{(2 \pi)^D} \frac{(\bar{q}^2)^r}{(\bar{q}^2+R)^l}$$ where $r(D)$ stands for a rational function in $D$ (i.e., in $\ep$) that can be determined by an explicit formula. To this end, the following formula is applied to the remaining integrals:
$$\int \frac{d^D \bar{q}}{(2 \pi)^D} \frac{(\bar{q}^2)^r}{(\bar{q}^2+R)^l} = \frac{1}{(16 \pi^2)^{D/4}} \frac{\Gamma(r+D/2)
\Gamma(l-r-D/2)}{\Gamma(D/2) \Gamma(l) (R^2)^{l-r-D/2}}.$$
Usually, these operations are carried out in terms of tensors to keep the size compact and to determine additional relations efficiently.
The above procedure is repeated until all momentum
integrals for the $p_r$ ($r=1,2,\dots,k$) are computed. As a result one is left with the integrals
over $x_i \in [0,1]$, equipped with a pre-factor $C(\varepsilon, N,M)$.

\vspace*{2mm}\noindent
{{\sf Step~1: From Feynman parameter integrals to Mellin--Barnes integrals
and multinomial series.}}

\vspace*{1mm} \noindent
Parts of these scalar integrals again can be
computed trivially related to the $\delta$-distributions,

\vspace*{-0.2cm}

\begin{equation*}
\int_0^1 dx_l \delta\bigg(\sum_{k=1}^n x_k - 1\bigg)
= \theta \bigg(1 - \sum_{k=1, k \neq l}^n x_k \bigg) \prod_{m=1, m \neq l}^n
\theta(x_m), \end{equation*}
where $\theta(z)$ is 1 if $z\geq0$ and 0 otherwise.
There may be more integrals, which can be computed,
usually as indefinite integrals, without special effort.
Mapping all Feynman-parameter integrals onto the $m$-dimensional unit cube (as described above) one obtains the following structure~:
\begin{equation}
\label{eq:A9}
{\cal I}(\ep,N) = C(\ep, N, M) \int_0^1 dy_1 \ldots \int_0^1 dy_m
\frac{\sum_{i=1}^k \prod_{{l}=1}^{r_i}
[P_{i,l}(y)]^{\alpha_{i,l}(\varepsilon,N)}}{[Q(y)]^{\beta(\varepsilon)}}~,
\end{equation}
with $k\in\set N$, $r_1,\dots,r_k\in\set N$ and
where $\beta(\ep)$ is  given by a rational function in $\ep$, i.e., $\beta(\ep)\in\set Q(\ep)$, and similarly
$\alpha_{i,l}(\ep,N) = n_{i,l} N + \overline{\alpha}_{i,l}$ for some $n_{i,l} \in \{0,1\}$ and $\overline{\alpha}_{i,l}\in\set Q(\ep)$, see also \cite{BOGNER}
in the case when no local operator insertions are present.
$C(\ep, N, M)$ is a factor which depends on the dimensional parameter $\ep$,
the integer parameter $N$ and $M$.
$P_i(y), Q(y)$ are polynomials in the remaining Feynman parameters $y=(y_1,\dots,y_m)$ written in multi-index notation.
In (\ref{eq:A9}) all terms
which stem from local operator insertions were geometrically resummed; see~\cite{BBK1}.

\smallskip

\noindent\textbf{Remark.} \textbf{(1)} After splitting the integral~\eqref{eq:A9} (in particular, the $k$ summands), the integrands fit into the input class of the multivariate Almkvist-Zeilberger algorithm. Hence, if the split integrals are properly defined, they
obey homogeneous recurrence relations
in $N$ due to the existence theorems in~\cite{ZEILB}. However, so far we
failed to compute these recurrences due to time and space limitations.\\
\noindent\textbf{Remark.} \textbf{(2)} Usually the calculation of ${\cal I}(\ep,N)$ for
fixed integer  values of $N$ is a simpler task. If sufficiently many of these values are
known, one may guess these recurrences and with this input derive closed forms for
${\cal I}(\ep,N)$ using the
techniques applied in \cite{BKKS}. This has been illustrated for a large class of
3-loop quantities. However, at present no method is known to calculate the
amount of moments needed.

\smallskip

\noindent The $y_i$-integrals finally turn into  Euler integrals.
Here we outline a general framework, although in practice, different algorithms are used in specific cases, cf.\ e.g.\ \cite{HYP1,HYP2}.
To compute the integrals (\ref{eq:A9}) over the variables
$y_i$ we proceed as follows:
\begin{itemize}
\item decompose the denominator function using Mellin--Barnes integrals, see \cite{MB} and references therein,
\item decompose the
numerator functions, if needed, into multinomial series.
\end{itemize}
The denominator function has the structure
\begin{equation*}
[Q(y)]^{\beta(\varepsilon)} = \left[\sum_{k=1}^n
q_k(y)\right]^{\beta(\varepsilon)}~,
\end{equation*}
with $q_k(y)=a_1\dots a_m$ where $a_i\in\set\{1,y_i,1-y_i\}$ for $1\leq i\leq m$.
This function can be decomposed applying its Mellin-Barnes integral
representation $(n-1)$ times,

\begin{equation}
\frac{1}{(A+B)^q} = \frac{1}{2 \pi i}
\int_{\gamma-i\infty}^{\gamma+i\infty}~d\sigma~A^\sigma~B^{-q-\sigma}
\frac{\Gamma(-\sigma) \Gamma(q+\sigma)}{\Gamma(q)}~.
\label{MBT}
\end{equation}
Here $\gamma$ denotes the real part of the contour. Often Eq.~(\ref{MBT}) has to be
considered in the sense of its analytic continuation, see \cite{WIWA}.
The numerator factors $[P_{i,l}(y)]^{\alpha_{i,l}(\varepsilon,N)}$ obey

\vspace*{-0.5cm}

\begin{equation*}
[P_{i,l}(y)]^{\alpha_{i,l}(\varepsilon,N)} = \left[\sum_{k=1}^w
p_k(y)\right]^{\alpha_{i,l}(\varepsilon,N)}~,
\end{equation*}
where the monomials $p_k(y)$ have the same properties as $q_k(y)$. One expands
\begin{equation*}
[P_{i,l}(y)]^{\alpha_{i,l}(\varepsilon,N)} = \sum_{\substack{k_1,\dots,k_{w-1}\geq0}}
{\alpha_{i,l}(\varepsilon,N) \choose k_1, \ldots, k_{w-1}} \prod_{l=1}^{w-1} p_l(y)^{k_l}
p_w(y)^{\alpha_{i,l}(\varepsilon,N)-\sum_{r=1}^{w-1}k_r}.
\end{equation*}
Now all integrals over the variables $y_j$ can be computed by using the formula
\begin{equation*}
\int_0^1 dy y^{\alpha - 1} (1-y)^{\beta - 1} = B(\alpha,\beta) =
\frac{\Gamma(\alpha) \Gamma(\beta)}{\Gamma(\alpha + \beta)}
\end{equation*}
and one obtains
\begin{equation}\label{eq:A3}
\begin{split}
{\cal I}(\ep,N) =
\frac{1}{(2\pi i)^n}
&\int_{\gamma_1-i\infty}^{\gamma_1+i\infty} d \sigma_1 \ldots 	
\int_{\gamma_n-i\infty}^{\gamma_n+i\infty} d \sigma_n\\
&\sum_{k_1=1}^{L_1(N)} ... \sum_{k_v=1}^{L_v(N,k_1, ..., k_{v-1})}
\sum_{k=1}^l C_k(\ep, N, M)
\frac{\Gamma(z_{1,k}) \ldots \Gamma(z_{u,k})}	
     {\Gamma(z_{u+1,k}) \ldots \Gamma(z_{v,k})};
\end{split}	
\end{equation}
\normalsize
$l\in\set N$ and the summation over $k_i$ comes from the multinomial sums,
i.e., the upper bounds $L_1(N),\dots,L_{v}(N,k_1,\dots,k_{v-1})$ are integer linear in
the dependent parameters or $\infty$. Moreover, the
$z_{u,k}$ are linear functions with rational coefficients in terms of $\ep$, the Mellin-Barnes
integration variables $\sigma_1,\dots,\sigma_n$, and
the summation variables $k_1,\dots,k_v$.

\vspace*{2mm}\noindent
{{\sf Step~2: Representation in multi--sums.}}
The Mellin-Barnes integrals are carried out applying the residue theorem in
Eq.~(\ref{eq:A3}). The following representation is obtained:
\begin{equation}
{\cal I}(\ep,N) = \sum_{n_1=1}^\infty ... \sum_{n_r=1}^\infty
\sum_{k_1=1}^{L_1(N)} ... \sum_{k_v=1}^{L_v(N,k_1, ..., k_{v-1})}
\sum_{k=1}^l C_k(\ep, N, M)
\frac{\Gamma(t_{1,k}) \ldots \Gamma(t_{v',k})}	
     {\Gamma(t_{v'+1,k}) \ldots \Gamma(t_{w',k})}.
\label{eq:A4}
\end{equation}
Here the $t_{l,k}$ are linear functions with rational coefficients in terms of the $n_1,\dots,n_r$, of the $k_1,\dots,k_v$, and of $\ep$. Note that the residue theorem
may imply more than one infinite sum per Mellin-Barnes integral, i.e.,
$r\geq n$.\\
In general, this approach leads to a highly nested multi-sum. Fixing the loop order of the Feynman integrals and restricting to certain special situations usually enables one to find sum representations with fewer summation signs. E.g., as
worked out in~\cite{sums-comp-description}, one can identify
the underlying sums in terms of generalized hypergeometric functions, i.e.,
the number of infinite sums are reduced to one or in some cases to zero.

\vspace*{2mm}\noindent
{{\sf Step~3: Laurent series in $\varepsilon$.}}
Eq.~(\ref{eq:A4}) can now be expanded in the parameter $\ep$ using

\vspace*{-0.2cm}

\begin{equation}
\Gamma(n+1+\bar{\ep}) = \frac{\Gamma(n) \Gamma(1+\bar{\ep})}{B(n,1+\bar{\ep})}
\label{eq:A6}
\end{equation}
with $\bar{\ep} = r \ep$ for some $r\in\set Q$ and

\vspace*{-0.3cm}

\begin{equation}\label{Equ:Expan}
B(n, 1 + \bar{\ep}) = \frac{1}{n}\exp\left(\sum_{k=1}^\infty \frac{(-\bar{\ep})^k}{k} S_k(n)\right)
= \frac{1}{n}\sum_{k=0}^\infty (-\bar{\ep})^k S_{\underbrace{\mbox{\scriptsize 1, \ldots
,1}}_{\mbox{\scriptsize
$k$}}}(n)
\end{equation}
and other well-known transformations for the $\Gamma$-functions. Here the harmonic sums
$S_{\vec{a}}(N)$ \cite{BK,Vermaseren:99} for $N\in\set N$ are recursively defined by
\begin{equation}\label{Equ:HSums}
S_{b,\vec{a}}(N) = \sum_{k=1}^N \frac{({\rm sign}(b))^k}{k^{|b|}} S_{\vec{a}}(k),~~~
S_{\emptyset} = 1~.
\end{equation}
Note that in (\ref{eq:A6}) $n$ may stand for a
linear combination of parameters with coefficients in ${\mathbb Q}$. In case
of non-integer weight factors $r_i$ for the parameters in $n$ analytic continuations of harmonic sums
have to be considered
\cite{ANCONT1,ANCONT2,ANCONT3,ANCONT4}.
In case that $n$ is not an integer one may shift
to $n \rightarrow k~n \in \mathbb{N}$, which leads to the usual definition of the
harmonic sums in (\ref{Equ:Expan}). However, the summation operators have now to be
generalized and one usually ends up with {\it cyclotomic} harmonic sums worked out in~\cite{ABS2012}.

\noindent Applying~\eqref{eq:A6} with~\eqref{Equ:Expan} to each factor in~\eqref{eq:A4} produces for some $L>0$ the expansion
\begin{equation}\label{Equ:LaurentExp}
{\cal I}(\ep,N) = \sum_{l = -L}^\infty \ep^l I_l(N);
\end{equation}
$L$ equals the loop order in case of infra-red finite integrals; otherwise, $L$
may be larger.

\vspace*{-0.1cm}

\begin{remark}\label{Remark:ExpansionExist}
In order to guarantee correctness of this construction, i.e., performing the expansion first on the summand level of~(\ref{eq:A4}) and afterwards applying the summation on the coefficients of the summand expansion (i.e., exchanging the differential operator $D_{\ep}$ and the summation quantifiers) analytic arguments have to be considered. For all our computations this construction was possible.
\end{remark}

\vspace*{-0.1cm}

\noindent The general expression of the functions $I_l(N)$ in terms of nested sums are
\begin{equation}\label{eq:A5}
\begin{split}
I_l(N) =\sum_{n_1=1}^\infty ... \sum_{n_r=1}^\infty
&\sum_{k_1=1}^{L_1(N)} ... \sum_{k_v=1}^{L_v(N,k_1, ..., k_{v-1})} \sum_{j=1}^s H_j(N;n_1, ..., n_r; ,k_1, ..., k_{v})\\
& \times\prod_i S_{\vec{a}_{i,j}}(L_{i,j}(N;n_1, ...,n_r; ,k_1,..., k_{v}));
\end{split}
\end{equation}

\vspace*{-0.3cm}

\noindent $H_j(N;n_1, ... , k_{v})$ denote proper hypergeometric terms\footnote{For a precise definition of proper hypergeometric terms we refer, e.g., to~\cite{Wegschaider}. For all our applications it suffices to know that $H_j$ might be a product of Gamma-functions (occurring in the numerator and denominator) with linear dependence on the variables $N,n_i,k_i$ times a rational function in these variables where the denominator factors linearly.}\label{Footnote:Proper} and
$S_{\vec{a}_{i,j}}(L_{i,j}(N;n_1, ..., k_{v}))$ are harmonic sums with the
index set $\vec{a}_{i,j}$ and $L_{i,j}$ (usually integer linear) functions of the arguments $(N;n_1, ..., k_{v})$. The
sum-structure
in (\ref{eq:A5}) is usually obtained performing the synchronization of arguments, see
\cite{Vermaseren:99},
and applying the associated quasi--shuffle algebra, see \cite{ALGEBRA}.

\section{First approach to the problem}\label{Equ:FirstApproach}

In the following we limit the investigation to a sub-class of integrals of the type
(\ref{eq:A7}) and consider two- and simpler three-loop diagrams, which occurred
in the calculation of the massive Wilson coefficients for deep-inelastic scattering; see~\cite{HYP2,HWILS1,HWILS2,HWILS3,sums-comp-description}.
Looking at the reduction steps of the previous section we obtain the following result. If we succeed in finding the representation~\eqref{Equ:LaurentExp} with~\eqref{eq:A5} it follows constructively that for each $N\in\set N$ with $N\geq\lambda$ for some $\lambda\in\set N$ the integral ${\cal I}(\ep,N)$ has a Laurent expansion in $\ep$ and thus it is an analytic function in $\varepsilon$ throughout an annular region centered at $0$ where the pole at $\ep=0$ has order $L$.
In~\cite{sums-comp-description,HYP2,SchneiderSummation:10} we started with the sum representation of the coefficients~\eqref{eq:A5} and the main task was to simplify the expressions in terms of harmonic sums.
In this article, we follow a new approach that directly attacks the sum representation~\eqref{eq:A4} and searches for the first coefficients of its $\ep$-expansion~\eqref{Equ:LaurentExp}.
By splitting~\eqref{eq:A4} accordingly (and pulling
$C_k(\ep,N,m)$) our integral can be written as a linear combination of
hypergeometric multi-sums of the following form.

\begin{assumption}\label{Assum:SumProp}

{\color{white}.}

\vspace*{-0.65cm}

\begin{equation}\label{physics-sums-general}
\mathcal{S}(\ep, N) = \sum\limits_{\sigma_1=p_1}^\infty \cdots
\sum\limits_{\sigma_s=p_s}^\infty
\sum\limits_{j_0=q_0}^{L_1(N)} \sum\limits_{j_1=q_1}^{L_2(N,j_0)}
\cdots \sum\limits_{j_r=q_r}^{L_r(N,j_0,\dots,j_{r-1})} \mathcal{F} \left( N,  \sigma,j_0,\dots,j_{r-1}),
\ep\right)
\end{equation}

\vspace*{-0.3cm}

\noindent where
\begin{enumerate}
\item $N\in\set N$ with $N \geq \lambda$ for some given $\lambda\in\set N $,
$\ep >0$ is a real parameter;
\item  the upper summation bounds $L_l(N,j_0,\dots,j_{l-1})$ are integer linear in $N,j_0,\dots,j_{l-1}$, and the lower bounds are given constants $p_i, q_l \in \mathbb{N}$ for all $1\leq i \leq s$ and $0 \leq l \leq r$;
\item $\mathcal{F}$ is a proper hypergeometric term (see Footnote~\ref{Footnote:Proper}) with respect to the
integer variable $N$ and all summation variables $(\sigma, j)=(\sigma_1,\dots,\sigma_s,j_0,\dots,j_r)\in
\mathbb{Z}^{s+r+1}$.
\end{enumerate}
\end{assumption}

\begin{remark}\label{Remark:SplittingSums}
While splitting the sum~\eqref{eq:A4} into sums of the form~\eqref{physics-sums-general} it might happen
that the infinite sums over individual monomials diverge for fixed values of $\varepsilon$,
despite the convergence of the complete
expression.
We will deal with these cases in Section~\ref{Sec:EfficientApproach} and consider only sums
which are convergent at the moment.
\end{remark}

In other words, we assume that~\eqref{physics-sums-general} itself is analytic in $\varepsilon$ throughout an annular region centered at $0$ and we try to find the first coefficients $F_{t}(N),F_{t+1}(N),\dots,F_u(N)$ in terms of
indefinite nested product-sum expressions of its expansion
\begin{equation}\label{Equ:FExp2}
\mathcal{S}(\ep, N) = F_t(N)\varepsilon^{t}+F_{t+1}(N)\varepsilon^{t+1}+F_{t+2}(N)\varepsilon^{t+2}+\dots.
\end{equation}
with $t\in\set Z$. In all our computations it turns out that the summand $\mathcal{F}\left( N, \sigma, j, \ep\right)$ satisfies besides properties (1)--(3) the following asymptotic behavior:

\smallskip

\begin{enumerate}
\item[(4)] for all $1 \leq i \leq s$ we have
\begin{equation}\label{limitconsiderations}
 \mathcal{F} \left( N, \sigma, j, \ep\right) =
\mathcal{O}\left( \sigma_i^{-d_i} e^{-c_i
\sigma_i}\right)\quad \text{as} \quad
\sigma_i\rightarrow \infty \quad \text{with} \quad c_i \geq 0, \quad
d_i > 0;
\end{equation}
\end{enumerate}
for simplicity we do not consider the log-parts. For later considerations in Section~\ref{Sec:FindRecurrences} we suppose that such constants $c_i$ and $d_i$ are given explicitly. E.g.,
using the behavior \cite[Section 13.6]{WIWA} of  $\log \Gamma(z)$
for large $\left| z \right|$
in the region where $\left| \arg(z) \right| < \pi$ and  $\left|
\arg(z+a) \right|< \pi$:
\begin{equation}\label{asymptGamma}
 \log \Gamma(z+a)= (z+a-\frac{1}{2})\log z-z+ \mathcal{O}(1),
\end{equation}
such constants can be easily computed. If not all $c_i>0$ for $1\leq i\leq s$, things get more complicated and --for simplicity-- we restrict ourselves to the case that $s=1$ and $c_1=0$; we refer again to Section~\ref{Sec:EfficientApproach} for further details how one can treat the more general case.

\smallskip

\begin{enumerate}
\item[(5)]
If $s=1$ and $c_1=0$, we suppose that we are given a constant $o\in\set N$ such that
\begin{equation}\label{Equ:SumDefined}
\mathcal{S}(\ep, N) = \sum\limits_{\sigma_1=p_1}^{\infty}\sigma_1^o\mathcal{F} \left( N,  \sigma_1,j,\ep\right)
\end{equation}
converges absolutely for any small nonzero $\ep$ around $0$, $N\geq B$ and any $j$ that runs over the summation range.
\end{enumerate}
Using, e.g., facts about hypergeometric functions from~\cite[Thm.~2.1.1]{AAR} the maximal such constant $o$ in~\eqref{Equ:SumDefined} can be determined.

\begin{example}
The following sum is a typical entry from the list of sum
representations for a class of Feynman parameter integrals we computed:
\small
\begin{equation} \label{Ex-857}
\begin{split}
 &\mathcal{U} \left(\ep,N\right):=
\sum\limits_{ \sigma_1 =0}^{\infty}
\sum\limits_{ j_0=0}^{N-3}
\sum\limits_{ j_1=0}^{N-j_0-3}
\sum\limits_{ j_2=0}^{j_0+1}
\binom{j_0 +1}{ j_2} \frac{\binom{ N  - j_0 -3}{j_1}\Gamma (j_1+j_2+2) \Gamma (j_1+j_2+3)}{N!}\\
& \times
\frac{(-1)^N\left(\frac{\ep}{2}+1\right)_{\sigma_1 } (-\ep)_{\sigma_1 }
    (j_1+ j_2+3)_{\sigma_1}  \left(3-\frac{\ep
}{2}\right)_{j_1}}{(j_1+4)_{\sigma_1 }
    \left(-\frac{\ep}{2}+j_1+j_2+4\right)_{\sigma_1}
\left(4-\frac{\ep}{2}\right)_{j_1+j_2}}
\frac{
    \Gamma (N-j_0-1) \Gamma (N-j_1-j_2-1)
   }{\Gamma(\sigma_1 +1) \Gamma(j_1+4)
    \Gamma (N-j_0-2) };
\end{split}
\end{equation}
\normalsize
we denote by $(x)_k=x(x+1)\dots(x+k-1)$ the Pochhammer symbol defined for non-negative integers $k$. Then using formulas such as $(x)_k=\Gamma(x+k)/\Gamma(x)$ and $\binom{x}{k}=\Gamma(x+1)/\Gamma(x-k+1)/\Gamma(k+1)$ and applying~\eqref{asymptGamma} we get the asymptotic behavior $\mathcal{O}(\sigma_1^{-5})$ of the summand. Moreover, we choose the maximal $o=3$ such that condition~\eqref{Equ:SumDefined} is satisfied.
\end{example}

\noindent Subsequently, we will develop an algorithm that finds, whenever possible, representations for the coefficients in the expansion~\eqref{Equ:FExp2}  in terms of indefinite nested sums and products\footnote{This means in particular indefinite nested sums over hypergeometric terms (like binomials, factorials, Pochhammer symbols) that may occur as polynomial expressions with the additional constraint that the summation index $i_j$ of a sum $\sum_{i_j=1}^{i_{j+1}}f(i_j)$ may occur only as the upper index of its inner sums and products, but not inside the inner sums themselves; for a formal but lengthy definition see~\cite{Schneider:11}. Typical examples are sums of the form~\eqref{Equ:HSums} above, or of the forms~\eqref{Equ:HomSol} and~\eqref{Equ:InHomSol} given below.}.

\begin{theorem}\label{Thm:AlgForMultiSum}
Let ${\cal S}(\varepsilon,N)$ be a sum with properties (1)--(5) from Assumption~\ref{Assum:SumProp}
which forms an analytic function in $\varepsilon$ throughout an annular region centered at $0$  with the Laurent expansion~\eqref{Equ:FExp2} for some $t\in\set Z$
for each nonnegative $N$; let $u\in\set N$. Then there is an algorithm which finds the maximal $r\in\{t-1,t,\dots,u\}$ such that the $f_t(N),\dots,f_r(N)$ are expressible in terms of indefinite nested product-sums; it outputs such expressions $F_t(N),\dots,F_r(N)$ and $\lambda\in\set N$ s.t.\ $f_i(k)=F_i(k)$ for all $0\leq i\leq r$ and all $k\in\set N$ with $k\geq\lambda$.
\end{theorem}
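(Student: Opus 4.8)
The plan is to reduce the statement about the multi-sum $\mathcal{S}(\ep,N)$ to a statement about a scalar linear recurrence in $N$, and then solve that recurrence coefficientwise in $\ep$. First I would apply the multivariate WZ/Almkvist--Zeilberger machinery of~\cite{Celine,WZtheory,Wegschaider,ZEILB}: since $\mathcal{F}$ is a proper hypergeometric term in $(N,\sigma,j)$, a telescoping argument over all the summation variables produces a linear recurrence $\sum_{i=0}^{d} a_i(\ep,N)\,\mathcal{S}(\ep,N+i) = g(\ep,N)$ with $a_i\in\set Q(\ep,N)$, where the inhomogeneous part $g$ is itself built from evaluations of the certificate at the summation boundaries. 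In the homogeneous, clean case ($g=0$) this is exactly the recurrence promised by the existence theorems; the boundary terms vanish because of properties (1)--(2) (integer-linear bounds, constant lower bounds) together with the decay in property (4) and the absolute convergence secured by property (5), so the infinite sums and the delta-like boundary contributions collapse. I would isolate this as the first lemma: $\mathcal{S}(\ep,N)$ satisfies a homogeneous linear recurrence in $N$ with coefficients in $\set Q(\ep,N)$, valid for $N\geq\lambda_0$.

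Next I would use Lemma~\ref{Lemma:UniqueSeqSol} (the uniqueness-of-sequence-solution statement referenced in the introduction): the recurrence together with sufficiently many initial values $\mathcal{S}(\ep,\lambda_0),\dots,\mathcal{S}(\ep,\lambda_0+d-1)$ determines $\mathcal{S}(\ep,N)$ for all $N\geq\lambda_0$. These initial values are concrete hypergeometric multi-sums with a \emph{fixed} integer $N$, hence finite or rapidly convergent sums whose $\ep$-expansions can be computed term by term to any desired order $u$ using~\eqref{eq:A6}--\eqref{Equ:Expan} and the harmonic-sum calculus; here properties (4)--(5) guarantee we may expand under the sum (as flagged in Remark~\ref{Remark:ExpansionExist}). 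So by order $u$ we have explicit Laurent data for the initial values.

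Now I would feed the recurrence into the coefficient-extraction solver (Corollary~\ref{Cor:ExpansionSolver}). Writing $\mathcal{S}(\ep,N)=\sum_{l\geq t} f_l(N)\ep^l$ and expanding the rational coefficients $a_i(\ep,N)$ in $\ep$, the recurrence cascades into a sequence of \emph{inhomogeneous} recurrences over $\set Q(N)$ for the unknown sequences $f_l(N)$: the leading order gives a homogeneous (or low-order) recurrence for $f_t(N)$, and each subsequent $f_l(N)$ satisfies the same operator applied to it equals a right-hand side assembled from the already-determined $f_t,\dots,f_{l-1}$. At each stage I invoke the difference-ring solver of~\cite{Hyper,Abramov:94,Schneider:01,Schneider-solving}: given the operator and the (by then known, indefinite-nested-product-sum) inhomogeneous part plus the initial values, it decides constructively whether a solution expressible in terms of indefinite nested product-sums exists, and if so returns it; otherwise it certifies non-existence. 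We march $l = t, t+1, \dots$; as soon as some $f_{r+1}(N)$ is shown \emph{not} to admit such a representation we stop and output $F_t,\dots,F_r$ together with a $\lambda\geq\lambda_0$ past which all the derived closed forms and all boundary adjustments are valid. Finally, the highly nested raw output is simplified to reduced depth via the algorithms of~\cite{Schneider:11}, though this is cosmetic and not needed for correctness.

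The main obstacle is the first lemma: establishing that the \emph{homogeneous} recurrence genuinely holds for the multi-sum with its actual, awkward summation ranges. The WZ certificate is designed for "natural" boundaries, whereas here the lower bounds are arbitrary constants $q_l$, the upper bounds are integer-linear in the outer variables, and — by Remark~\ref{Remark:SplittingSums} — individual monomials can diverge even though $\mathcal{S}$ converges; one must verify that all boundary/telescoping residue terms really cancel or sum to zero, which is precisely where properties (1), (2), (4) and (5) do the work, and where~\eqref{asymptGamma} supplies the quantitative decay. (If cancellation fails one falls back to the inhomogeneous-recurrence strategy of Section~\ref{Sec:EfficientApproach}, but under the hypotheses of the theorem the homogeneous route goes through.) A secondary, purely bookkeeping difficulty is tracking the threshold $\lambda$: each recurrence is valid only eventually, each closed form agrees with the sequence only eventually, and these finitely many thresholds must be combined into a single $\lambda$ — routine, but it must be stated carefully to match the claim $f_i(k)=F_i(k)$ for all $k\geq\lambda$.
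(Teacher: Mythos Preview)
Your proposal follows essentially the same three-step architecture as the paper's own argument in Section~\ref{Sec:EffectiveMethod}: (1) obtain a \emph{homogeneous} recurrence~\eqref{Equ:HomRec} for $\mathcal{S}(\ep,N)$ via the WZ/Wegschaider existence results, (2) compute the $\ep$-expansions of finitely many initial values, and (3) invoke Algorithm~\FLSR\ and Corollary~\ref{Cor:ExpansionSolver} to extract the Laurent coefficients one by one, stopping at the first that is not expressible as an indefinite nested product-sum. This is precisely the paper's route.

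One point where your justification is slightly off: you argue that the homogeneous recurrence arises because ``the boundary terms vanish'' thanks to properties (1)--(2) and (4)--(5). That is not quite the mechanism. Properties (4)--(5) are used to ensure the summed certificate is analytically well-defined (in particular that the $\sigma_i\to\infty$ limits in the $\Delta_{\sigma_i}$-parts exist and the degree restrictions of Remark~\ref{Remark:DegreeBounds} can be met); they do not force the \emph{finite} boundary contributions to cancel. The existence of a genuinely homogeneous recurrence is an abstract consequence of the WZ fundamental theorem \cite[Cor.~3.3]{WZtheory} combined with \cite[Thm.~3.6]{Wegschaider}, which the paper simply cites rather than re-derives. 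So your ``first lemma'' is correct as stated, but its proof is by direct appeal to those existence theorems, not by showing that a particular certificate's boundary residues all vanish. Your hedging in the final paragraph shows you sense this; just replace the boundary-cancellation narrative with the black-box citation and the argument is clean.
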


\noindent This result is based on the fact that such sums ${\cal S}(\varepsilon,N)$ satisfy a recurrence relation.

\begin{example}\label{Exp:SingleExt1}
Consider the single nested sum
\begin{equation}\label{Equ:SimpleSumProblem}
\mathcal{S}(\ep, N)=\sum_{k=0}^{N-1} \frac{(-2)^{k} (k+2) \Gamma (4-\varepsilon) \Gamma \big(\frac{\varepsilon}{2}+3\big) \Gamma (N) \Gamma\big(-\frac{\varepsilon}{2}+k+2\big)}{\Gamma \big(2-\frac{\varepsilon}{2}\big) \Gamma (-\varepsilon+k+4) \Gamma\big(\frac{\varepsilon}{2}+k+3\big) \Gamma(N-k)}
\end{equation}
over a proper hypergeometric term; note that an expansion~\eqref{Equ:FExp2} with $t=0$ exists following the arguments from Remark~\ref{Remark:ExpansionExist}. In the first step we compute the recurrence relation
\begin{equation}\label{Equ:SingleSumRec}
a_0(\varepsilon,N)\mathcal{S}(\ep, N)+a_1(\varepsilon,N) \mathcal{S}(\ep, N+1)+a_2(\varepsilon,N)\mathcal{S}(\ep, N+2)=h(\ep,N)
\end{equation}
with
\small
\begin{equation}\label{Equ:SingleSumRecCoef}
\begin{split}
h(\ep,N)&=-24 N-48+(2 N-20)\varepsilon+(2 N+6)\varepsilon^2+2\varepsilon^3,\\
a_0(\varepsilon,N)&=2 N (N+1) (\varepsilon+2 N+5),\quad
a_1(\varepsilon,N)=(N+1) \big(\varepsilon^2+2 \varepsilon N+5 \varepsilon+4 N+12\big),\\
a_2(\varepsilon,N)&=(\varepsilon-N-4) (\varepsilon+2 N+3) (\varepsilon+2 N+6)
\end{split}
\end{equation}
\normalsize
which holds for all $N\geq1$.
This task can be accomplished for instance by the packages~\cite{Paule-Schorn}, \cite{Wegschaider} or~\cite{Sigma-Manual} which are based on the creative telescoping paradigm presented in~\cite{Zeilberger-identities} or the paradigm presented in~\cite{Celine}. Then together with the first two initial values for $N=1,2$,
\begin{equation}\label{Equ:SingleSumInitial}
\mathcal{S}(\ep, 1)=2\quad\quad\text{and}\quad\quad\mathcal{S}(\ep,2)=2-\frac{6}{\varepsilon+6}=1+\frac{1}{6}\varepsilon-\frac{1}{36}\varepsilon^2+O(\varepsilon^3),
\end{equation}
we will be able to compute, e.g., the sum representations of the first 2 coefficients
\begin{align}\label{Equ:SingleSumF0}
F_0(N)&=\tfrac{3 \big(2 N^2+4 N+1\big)}{2 N (N+1) (N+2)}-\tfrac{3 (-1)^N}{2 N (N+1)(N+2)},\\
\label{Equ:SingleSumF1}
F_1(N)&=\tfrac{10 N^3+52 N^2+63 N+10}{8 N (N+1) (N+2)^2}-\tfrac{3 S_1(N)}{2 N
(N+2)}+\tfrac{3 S_{-1}(N)}{2 N (N+2)}+\tfrac{(-1)^N (N-10)}{8 N (N+1) (N+2)^2};
\end{align}
of the $\ep$-expansion~\eqref{Equ:FExp2}
with $t=0$; for more details see Examples~\ref{Exp:SingleExt2} and~\ref{Exp:SingleExt3}.
\end{example}

In Subsection~\ref{Sec:RecurrenceSolver} we will develop a recurrence solver which finds the representation of the $F_i(N)$ from~\eqref{Equ:FExp2} in terms of indefinite nested sums and products whenever this is possible. Afterwards, we combine all these methods to prove Theorem~\ref{Thm:AlgForMultiSum} in Subsection~\ref{Sec:EffectiveMethod}.

\subsection{A recurrence solver for $\varepsilon$-expansions}\label{Sec:RecurrenceSolver}

Restricting the $\mathcal{O}$-notation to formal Laurent series $f=\sum_{i=r}^{\infty}f_i\ep^i$ and $g=\sum_{i=s}^{\infty}g_i\ep^i$, the notation
$f=g+O(\ep^t)$
for some $t\in\set Z$ means that the order of $f-g$ is larger or equal to $t$, i.e., $f-g=\sum_{i=t}^{\infty}h_i\ep^i$. Subsequently, $\set K$ denotes a field with $\set Q\subseteq\set K$ in which the usual operations can be computed. We start with the following

\begin{lemma}\label{Lemma:UniqueSeqSol}
Let $\mu\in\set N$, and let $a_0(\varepsilon,N),\dots,a_d(\varepsilon,N)\in\set K[\varepsilon,N]$ be such that $a_d(0,k)\neq0$ for all $k\in\set N$ with $k\geq\mu$. Let  $h_t,\dots,h_u:\set N\to\set K$ ($t,u\in\set Z$ with $t\leq u$) be functions, and let $c_{i,k}\in\set K$ with $t\leq i\leq u$ and $\mu\leq k<\mu+d$. Then there are unique functions $F_t,\dots,F_u:\set N\to\set K$ (up to the first $\mu$ evaluation points) such that $F_i(k)=c_{i,k}$ for all $t\leq i\leq u$ and $\mu\leq k< d+\mu$ and such that for $T(\varepsilon,N)=\sum_{i=t}^uF_i(N)\varepsilon^i$ we have
\begin{equation}\label{Equ:ExpansionEquMod}
a_0(\varepsilon,N)T(\varepsilon,N)+\dots+a_d(\varepsilon,N)T(\varepsilon,
N+d)=h_0(N)+\dots+h_u(N)\varepsilon^u+O(\varepsilon^{u+1})
\end{equation}
for all $N\geq\mu$. If the $h_i(N)$ are computable, the values of the $F_i(N)$ with $N\geq\mu$ can be computed by recurrence relations.
\end{lemma}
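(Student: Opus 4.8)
The plan is to solve for the functions $F_i$ coefficient by coefficient in $\varepsilon$, reading off a scalar recurrence in $N$ for each of them. First I would write each polynomial coefficient $a_j(\varepsilon,N)$ as a finite Taylor series in $\varepsilon$, say $a_j(\varepsilon,N)=\sum_{m\geq0}a_j^{(m)}(N)\varepsilon^m$ with $a_j^{(m)}(N)\in\set K[N]$, and substitute the ansatz $T(\varepsilon,N)=\sum_{i=t}^uF_i(N)\varepsilon^i$ into the left-hand side of \eqref{Equ:ExpansionEquMod}. Collecting the coefficient of $\varepsilon^\ell$ on both sides for $\ell=t,t+1,\dots,u$ yields, for each such $\ell$, an identity of the shape
\begin{equation*}
\sum_{j=0}^d a_j^{(0)}(N)\,F_\ell(N+j) \;=\; h_\ell(N)-\!\!\sum_{\substack{j,m,i:\,m+i=\ell\\ m\geq1}}\!\! a_j^{(m)}(N)\,F_i(N+j),
\end{equation*}
where the right-hand side involves only the \emph{previously determined} coefficient functions $F_i$ with $i<\ell$ (for $\ell=t$ the sum on the right is empty, so the right-hand side is just $h_t(N)$). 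Thus the system triangularizes: $F_t$ is determined first, then $F_{t+1}$, and so on up to $F_u$.

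Next I would observe that for each fixed $\ell$ this is an inhomogeneous linear recurrence of order $d$ in $N$ with leading coefficient $a_d^{(0)}(N)=a_d(0,N)$, which by hypothesis is nonzero for all $k\geq\mu$. Hence the recurrence can be solved forward: given the $d$ consecutive values $F_\ell(\mu),F_\ell(\mu+1),\dots,F_\ell(\mu+d-1)$ — which are exactly the prescribed data $c_{\ell,\mu},\dots,c_{\ell,\mu+d-1}$ — one computes $F_\ell(\mu+d),F_\ell(\mu+d+1),\dots$ recursively by
\begin{equation*}
F_\ell(N+d)=\frac{1}{a_d(0,N)}\Bigl(\text{(right-hand side above)}-\sum_{j=0}^{d-1}a_j^{(0)}(N)F_\ell(N+j)\Bigr),
\end{equation*}
and this is well-defined and unique for all $N\geq\mu$; the first $\mu$ values of $F_\ell$ are unconstrained, which is the source of the phrase "up to the first $\mu$ evaluation points." Running this for $\ell=t,\dots,u$ in increasing order establishes both existence and uniqueness, and simultaneously shows the $F_i(N)$ are computable from the $h_i(N)$ via explicit recurrences. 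A small point to handle carefully is that the right-hand side of \eqref{Equ:ExpansionEquMod} only pins down the $\varepsilon^\ell$-coefficient for $\ell\le u$ — the $O(\varepsilon^{u+1})$ tail is free — so the system genuinely involves only finitely many unknown functions $F_t,\dots,F_u$ and closes off cleanly at $\ell=u$.

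I don't expect a serious obstacle here; the argument is essentially bookkeeping plus the standard fact that a linear recurrence with invertible leading coefficient has unique forward solutions from given initial data. The only mildly delicate step is the triangular decoupling: one must verify that after extracting the $\varepsilon^\ell$-coefficient, every term in which an unknown $F_i$ appears with $i\ge\ell$ actually carries the $m=0$ Taylor part of some $a_j$, so that all higher-index unknowns drop out and the honestly new unknown is only $F_\ell$ evaluated at shifted arguments. Once the indexing $m+i=\ell$ is written out this is immediate. I would also note in passing that the $d$ initial values $c_{\ell,k}$ for each $\ell$ are precisely what the order-$d$ recurrence needs, which is why exactly $\mu\le k<\mu+d$ appears in the hypothesis.
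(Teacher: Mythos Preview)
Your proposal is correct and follows essentially the same approach as the paper: substitute the ansatz, compare coefficients of $\varepsilon^\ell$ to obtain a triangular system of order-$d$ recurrences with leading coefficient $a_d(0,N)$, and solve forward from the prescribed initial data. The paper phrases the triangularization as ``move the already-found $F_t(N)\varepsilon^t$ to the right-hand side and repeat,'' whereas you write out the Taylor expansion of the $a_j$ and the index condition $m+i=\ell$ explicitly, but the content is identical.
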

\begin{proof}
Plugging the ansatz $T(\varepsilon,N)=\sum_{i=t}^u F_i(N)\varepsilon^i$ into~\eqref{Equ:ExpansionEquMod} and doing coefficient comparison w.r.t.\ $\varepsilon^t$ yields the constraint
\begin{equation}\label{Equ:LinRecConst}
a_0(0,N)F_t(N)+\dots+a_d(0,N)F_t(N+d)=h_t(N).
\end{equation}
Since $a_d(0,N)$ is non-zero for any integer evaluation $N\geq\mu$, the function $F_0:\set N\to\set K$ is uniquely determined by the initial values $F_t(\mu)=c_{t,\mu},\dots,F_t(\mu+d-1)=c_{t,\mu+d-1}$ -- up to the first $\mu$ evaluation points; in particular the values $F_t(k)$ for $k\geq\mu$ can be computed by the recurrence relation~\eqref{Equ:LinRecConst}.
Moving the $F_t(N)\varepsilon^t$ in~\eqref{Equ:ExpansionEquMod} to the right hand side gives

\vspace*{-0.5cm}

\begin{multline*}
a_0(\varepsilon,N)\sum_{i=t+1}^uF_i(N)\varepsilon^i+\dots+a_d(\varepsilon,N)
\sum_{i=t+1}^uF_i(N+d)\varepsilon^i\\[-0.3cm]
=-\Big[a_0(\varepsilon,N)h_t(N)\ep^t+\dots+a_d(\varepsilon,N)h_t(N+d)\ep^t\Big]
+\sum_{i=t}^uh_i(N)\varepsilon^i;
\end{multline*}

\vspace*{-0.3cm}

\noindent denote the coefficient of $\varepsilon^i$ on the right side by $\tilde{h}_i$. Since the coefficient of $\ep^t$ on the left side is $0$, it is also $0$ on the right side and we can write
\begin{equation*}
a_0(\varepsilon,N)\sum_{i=t+1}^{u}F_{i}(N)\varepsilon^i+\dots+a_d(\varepsilon,N)
\sum_{i=t+1}^{u}F_{i}(N+d)\varepsilon^i=\sum_{i=t+1}^{u}\tilde{h}_{i}(N)\varepsilon^i + O(\varepsilon^{u+1})
\end{equation*}
for all $N\in\set N$ with $N\geq\mu$.
Repeating this process proves the lemma.
\end{proof}

\begin{example}\label{Exp:SingleExt2}
Consider the recurrence~\eqref{Equ:SingleSumRec} with the coefficients~\eqref{Equ:SingleSumRecCoef}. Then by Lemma~\ref{Lemma:UniqueSeqSol} there are unique functions $F_0(N)$ and $F_1(N)$ with $T(N)=F_0(N)+F_e(N)\ep$ such that $T(\ep,1)=2$, $T(\ep,2)=1+\tfrac{1}{6}\ep$ and
\begin{equation}\label{Equ:SingleSumRecMod}
a_0(\varepsilon,N)T(\ep, N)+a_1(\varepsilon,N) T(\ep, N+1)+a_2(\varepsilon,N)T(\ep, N+2)=h(\ep,N)+O(\ep^2)
\end{equation}
hold for $N\geq1$. In particular, by setting $\varepsilon=0$, we get
\begin{equation}\label{Equ:SingleSumConstrained}
a_0(0,N)F_0(N)+a_1(0,N)F_0(N+1)+a_2(0,N)F_0(N+2)=-24N-48;
\end{equation}
the values of $F_0(N)$ can be computed with~\eqref{Equ:SingleSumConstrained} and the initial values $F_0(1)=2,F_0(2)=1$.
\end{example}

\noindent At this point we exploit algorithms from~\cite{Hyper,Abramov:94,Schneider:01,Schneider-solving} which can constructively decide if a solution with certain initial values is expressible in terms of indefinite nested products and sums. To be more precise, with the algorithms implemented in~\SigmaP\ one can solve the following problem.

\medskip
\small
\begin{ProblemSpec}{\textbf{Problem \ProblemRS}: \textbf{R}ecurrence \textbf{S}olver for indefinite nested product-sum expressions.}
\textbf{Given} $a_0(N),\dots,a_d(N)\in\set K[N]$; given $\mu\in\set N$ such that $a_d(k)\neq0$ for all $k\in\set N$ with $N\geq\mu$; given an expression $h(N)$ in terms of indefinite nested product-sum expressions which can be evaluated for all $N\in\set N$ with $N\geq\mu$; given the initial values $(c_{\mu},\dots,c_{\mu+d-1})$ which produce the sequence $(c_i)_{i\geq\mu}\in{\set K}^{\set N}$ by the defining recurrence relation
$$a_0(N)c_N+a_1(N)c_{N+1}+\dots+a_d(N)c_{N+d}=h(N)\quad\forall N\geq\mu.$$
\textbf{Find}, if possible, $\lambda\in\set N$ with $\lambda\geq\mu$ and an indefinite nested product-sum expression $g(N)$ such that $g(k)=c_k$ for all $k\geq\lambda$.
\end{ProblemSpec}
\normalsize
\smallskip

\noindent\textit{Remark.} Later, we will give further details only for a special case that occurred in almost all instances of our computations related to Feynman integrals; see Theorem~\ref{Thm:FactorRec}.

\begin{example}\label{Exp:SingleExt3}
With the input $F_0(1)=2, F_0(2)=1$ and~\eqref{Equ:SingleSumConstrained} \SigmaP\ computes the solution~\eqref{Equ:SingleSumF0}.
Plugging this partial solution $T(\ep,N)=F_0(N)+\dots$ into~\eqref{Equ:SingleSumRecMod} and doing coefficient comparison leads to
\small
\begin{equation*}
\sum_{i=0}^2a_i(0,N)F_1(N+i)=\frac{-10 N^4-98 N^3-344 N^2-511 N-267}{(N+2) (N+3)(N+4)}-\frac{3 (-1)^N (3 N+7)}{(N+2) (N+3)(N+4)}.
\end{equation*}
\normalsize
Then together with $F_1(1)=0,F_1(2)=1/6$, \SigmaP\ finds~\eqref{Equ:SingleSumF1}. Since also~\eqref{Exp:SingleExt1} satisfies~\eqref{Equ:SingleSumRecMod} with the same initial values~\eqref{Equ:SingleSumInitial}, the first two coefficients of the expansion of~\eqref{Exp:SingleExt1} are equal to $F_0(N)$ and $F_1(N)$ by Lemma~\ref{Lemma:UniqueSeqSol}.
\end{example}

\noindent This iterative procedure can be summarized as follows.
\smallskip

\small

\noindent\textbf{Algorithm~\FLSR} (\textbf{F}ormal \textbf{L}aurent \textbf{S}eries solutions of linear \textbf{R}ecurrences)\\
\textbf{Input:} $\mu\in\set N$; $a_0(\varepsilon,N),\dots,a_d(\varepsilon,N)\in\set K[\varepsilon,N]$ such that $a_d(0,k)\neq0$ for all $k\in\set N$ with $k\geq\mu$; indefinite nested product-sum expressions $h_t(N)$, $\dots$, $h_u(N)$ ($t,u\in\set Z$ with $t\leq u$) which can be evaluated for all $N\in\set N$ with $N\geq \mu$;
$c_{i,j}\in\set K$ with $t\leq i\leq u$ and $\mu\leq j<\mu+d$.\\
\textbf{Output} $(r,\lambda,\tilde{T}(N))$:
The maximal number $r\in\{t-1,0,\dots,u\}$ s.t.\ for the unique solution  $T(N)=\sum_{i=t}^uF_i(N)\varepsilon^i$ with $F_i(k)=c_{i,k}$ for all $\mu\leq k<\mu+d$ and with the relation~\eqref{Equ:ExpansionEquMod} the following holds: there are indefinite nested product-sum expressions that are equal to $F_t(N),\dots,F_r(N)$ for all $N\geq\lambda$ for some $\lambda\geq\mu$; if $r\geq0$, return such an expression $\tilde{T}(N)$ for $T(N)$ together with $\lambda$.

\vspace*{-0.3cm}

\begin{enumerate}
\item (Preprocessing) By Lemma~\ref{Lemma:UniqueSeqSol} we can compute as many initial values $c_{i,k}:=F_i(k)$ for $k\geq\mu$ as needed for the steps given below (at most $\lambda-\mu$ extra values are needed).

\vspace*{-0.3cm}

\item Set $r:=t$, $\lambda:=\mu$, and $\tilde{T}(N):=0$.

\vspace*{-0.3cm}

\item Note that $(F_r(N))_{N\geq\mu}$ is defined by the initial values $F_r(N)$ ($\lambda\leq N< d+\lambda$) and the recurrence
\begin{equation}\label{KeyEqu}
a_0(0,N)F_r(N)+\dots+a_d(0,N)F_r(N+d)=h_r(N)
\end{equation}
for all $N\in\set N$ with $N\geq\lambda$; see the proofs of Lemma~\ref{Lemma:UniqueSeqSol} or Theorem~\ref{Thm:ExpansionAlg}. By solving problem~\ProblemRS\
decide constructively if there is a $\lambda'\geq\lambda$ such that $F_r(N)$ can be computed in terms of an indefinite nested product-sum expression  $\tilde{F}_r(N)$ for all $N\in\set N$ with $N\geq \lambda'$.

\vspace*{-0.3cm}

\item If this fails, RETURN $(r-1,\lambda,\tilde{T}(N))$. Otherwise, set $\tilde{T}(N):=\tilde{T}(N)+\tilde{F}_r(N)\varepsilon^r$.

\vspace*{-0.3cm}

\item If $r=u$, RETURN $(r,\lambda, \tilde{T}(N))$.

\vspace*{-0.3cm}

\item Collect the coefficients (product-sum expressions) w.r.t.\ $\ep^i$ for all $i$ ($r+1\leq i\leq u$):

\vspace*{-0.6cm}

\begin{equation*}
h'_i(N):=\text{coeff}(-\Big[a_0(\varepsilon,N)F_r(N)+\dots+a_d(\varepsilon,N)F_r(N+d)\Big]
+\sum_{i=r+1}^uh_i(N)\varepsilon^{i},\varepsilon^i).
\end{equation*}

\vspace*{-0.5cm}

\item Set $h_i:=h'_i$ for all $r+1\leq i\leq u$, set $r:=r+1$ and GOTO Step~3.
\end{enumerate}
\normalsize

\begin{theorem}\label{Thm:ExpansionAlg}
The algorithm terminates and fulfills the input--output specification.
\end{theorem}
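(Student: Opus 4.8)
The plan is to verify termination and correctness of Algorithm~\FLSR\ by combining the structural decomposition already established in Lemma~\ref{Lemma:UniqueSeqSol} with the termination guarantee of the subroutine solving Problem~\ProblemRS. First I would address termination. The loop variable $r$ starts at $t$ and is incremented by exactly one in Step~7 each time control passes through; every other branch (Step~4 on failure, Step~5 on $r=u$) exits with a RETURN. Since $u$ is a fixed integer and $r$ strictly increases, the algorithm performs at most $u-t+1$ passes through Step~3 and halts. The only internal call that is not a finite arithmetic operation is the invocation of Problem~\ProblemRS\ in Step~3, which terminates by the cited results of~\cite{Hyper,Abramov:94,Schneider:01,Schneider-solving}; the coefficient extraction in Step~6 is a finite polynomial manipulation. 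So termination is immediate.

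The heart of the correctness argument is a loop invariant: upon entering Step~3 with the current value of $r$, the data $(h_r,\dots,h_u)$ are the coefficients $\tilde h_r,\dots,\tilde h_u$ of the reduced equation obtained from~\eqref{Equ:ExpansionEquMod} after the partial solution $\tilde T(N)=\sum_{i=t}^{r-1}F_i(N)\varepsilon^i$ has been substituted and the lower-order coefficients cancelled, and moreover $\tilde T(N)=\sum_{i=t}^{r-1}\tilde F_i(N)\varepsilon^i$ with each $\tilde F_i$ an indefinite nested product-sum expression agreeing with $F_i$ for $N\geq\lambda$. The base case $r=t$ is trivial: $\tilde T=0$ and $h_t,\dots,h_u$ are the given inputs. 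For the inductive step one unwinds exactly the computation in the proof of Lemma~\ref{Lemma:UniqueSeqSol}: writing $T(\varepsilon,N)=\tilde F_r(N)\varepsilon^r+\sum_{i=r+1}^u F_i(N)\varepsilon^i+\tilde T(N)$, moving the $\varepsilon^r$-term of $a_j(\varepsilon,N)\tilde F_r(N+j)$ to the right side, and observing that the coefficient of $\varepsilon^r$ on the left vanishes (so it vanishes on the right as well, which is precisely the constraint~\eqref{KeyEqu} that the subroutine was asked to solve), one sees that the new right-hand-side coefficients $h'_i$ computed in Step~6 are exactly the $\tilde h_i$ of the further-reduced equation. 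This re-establishes the invariant with $r$ replaced by $r+1$. In particular, by~\eqref{KeyEqu} together with the initial values, $(F_r(N))_{N\geq\lambda}$ is uniquely determined, so whatever Problem~\ProblemRS\ returns is a genuine closed form for it (up to finitely many exceptional points, captured by updating $\lambda$ to $\lambda'$).

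With the invariant in hand the output specification follows directly. If the call in Step~3 fails for some $r$, then by definition of Problem~\ProblemRS\ no indefinite nested product-sum expression represents $F_r(N)$ for all large $N$; since the sequence $(F_r(N))$ is uniquely pinned down by~\eqref{KeyEqu} and the initial data coming from Lemma~\ref{Lemma:UniqueSeqSol}, this is a property of the solution itself, not an artefact of the reduction, so $r-1$ is indeed the maximal index with the required property, and the returned $\tilde T(N)=\sum_{i=t}^{r-1}\tilde F_i(N)\varepsilon^i$ is correct for $N\geq\lambda$. If instead the loop runs to $r=u$ without failure, all of $F_t(N),\dots,F_u(N)$ admit such expressions and $\tilde T(N)=\sum_{i=t}^u\tilde F_i(N)\varepsilon^i$ is returned; the degenerate case $r=t-1$ (failure already at $r=t$) is covered by allowing $t-1$ as a legal output value. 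Finally, the remark in Step~1 that only finitely many extra initial values $c_{i,k}$ are ever needed is exactly the statement of Lemma~\ref{Lemma:UniqueSeqSol} that the $F_i$ are computable by recurrences from the given data, so the preprocessing is well defined.

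The step I expect to require the most care in writing out is the bookkeeping in the inductive step: one must check that the coefficient of $\varepsilon^r$ on the right-hand side genuinely vanishes once the true (not yet closed-form) $F_r$ satisfies~\eqref{KeyEqu}, and that substituting the \emph{closed-form} surrogate $\tilde F_r$ — which equals $F_r$ only for $N\geq\lambda'$ — does not corrupt the higher coefficients $h'_{r+1},\dots,h'_u$ outside that range. This is resolved by propagating $\lambda$ forward (Step~3 may enlarge it to $\lambda'$) and reading all subsequent identities as holding for $N\geq\lambda$ only; since $\lambda$ is incremented at most finitely often and each increment is finite, the conclusion ``for all $N\geq\lambda$'' in the output is legitimate. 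Everything else is the routine coefficient comparison already carried out in the proof of Lemma~\ref{Lemma:UniqueSeqSol}.
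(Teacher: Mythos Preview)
Your proposal is correct and follows essentially the same approach as the paper: both argue by a loop invariant stating that on entry to the $r$th iteration the current $h_r,\dots,h_u$ are the coefficients of the reduced relation~\eqref{Equ:RecExpansion} and that $\tilde T(N)=\sum_{i=t}^{r-1}\tilde F_i(N)\varepsilon^i$ matches the true $F_i$ for $N\geq\lambda$, with the inductive step carried out by the same coefficient comparison and right-hand-side update as in Lemma~\ref{Lemma:UniqueSeqSol}. You are somewhat more explicit than the paper about termination and about propagating $\lambda\to\lambda'$, but these are elaborations of the same argument rather than a different route.
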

\begin{proof}
We show that entering the $r$th iteration of the loop ($r\geq t$) we have for all $N\geq\lambda$ that
\begin{equation}\label{Equ:RecExpansion}
a_0(\varepsilon,N)\sum_{i=r}^uF_i(N)\varepsilon^i+\dots+
a_d(\varepsilon,N)\sum_{i=r}^uF_i(N+d)\varepsilon^i
=\sum_{i=r}^uh_i(N)\varepsilon^{i}+O(\varepsilon^{u+1})
\end{equation}
where the $h_r(N),\dots,h_u(N)$ are given explicitly in terms of indefinite nested product-sum expressions. Moreover, we show that the obtained expression $\tilde{T}(N)=\sum_{i=t}^{r-1}\tilde{F}_i(N)\varepsilon^i$ equals the values $\sum_{i=t}^{r-1}F_i(N)\varepsilon^i$ for each $N\geq\lambda$. For $r=t$ this holds by assumption. Now suppose that these properties hold when entering the $r$th iteration of the loop ($r\geq t$).
Then coefficient comparison in~\eqref{Equ:RecExpansion} w.r.t.\ $\varepsilon^r$  yields the constraint~\eqref{KeyEqu}
for all $N\geq\lambda$ as claimed in Step~3 of the algorithm. Solving problem~\ProblemRS\ decides constructively if there is a $\lambda'\geq0$ such that $F_r(N)$ can be computed by an expression in terms of indefinite nested product-sum expressions, say $\tilde{F}_r(N)$, for all $N$ with $N\geq \lambda'$.
If this fails, $F_r(N)$ cannot be represented with such an expression and the output $(r-1,\lambda,\tilde{T}(N))$ with $\tilde{T}(N)=\sum_{i=t}^{r-1}\tilde{F}_i(N)$ is correct. Otherwise, the indefinite nested product-sum expressions $\tilde{F}_i(N)$ for $t\leq i\leq r$ give the values $F_i(N)$ for all $N\in\set N$ with $N\geq\lambda'$. Now move the term $F_r(N)\varepsilon^{r}$ in~\eqref{Equ:RecExpansion} to the
right hand side and replace it with $\tilde{F_r}(N)\varepsilon^{r}$. This gives

\vspace*{-0.5cm}

\begin{align*}
a_0(\varepsilon,N)\sum_{i=r+1}^u&F_i(N)\varepsilon^i+\dots+
a_d(\varepsilon,N)\sum_{i=r+1}^uF_i(N+d)\varepsilon^i
=-\sum_{i=0}^da_i(\varepsilon,N)\tilde{F}_r(N+i)\\[-0.25cm]
&+\sum_{i=r}^uh_i(N)\varepsilon^{i}+O(\ep^{u+1})=:\tilde{h}_{r+1}(N)\varepsilon^{r+1}+\dots+\tilde{h}_{u}(N)\varepsilon^{u}+O(\ep^{u+1})
\end{align*}

\vspace*{-0.15cm}

\noindent for all $N\geq\lambda'$ where
$\tilde{h}_{r+1}(N),\dots,\tilde{h}_u(N)$ are given in terms of indefinite nested product-sum expressions that can be evaluated for all $N\in\set N$ with $N\geq\lambda'$. By redefining the $h_i(N)$ as in Step~7 of the algorithm we obtain the relation~\eqref{Equ:RecExpansion} for the case $r+1$.
\end{proof}

\noindent Algorithm~\FLSR\ has been implemented within the summation package~\texttt{Sigma}. E.g., the expansion for the sum~\eqref{Equ:SimpleSumProblem} with $s=0$, $t=1$ and $\texttt{start}=1$ is computed by
\small
\begin{multline*}
\texttt{GenerateExpansion}[a_0(\varepsilon,N)S[N]+a_1(\varepsilon,N) S[N+1]+a_2(\varepsilon,N)S[N+2],\\
\{-24 N-48,2 N-20\},S[N], \{\ep, s, t\},
\{\texttt{start},\{\{2, 1\}, \{0, 1/6\}\}\}];
\end{multline*}
\normalsize
here the $a_i(\varepsilon,N)$ stand for the polynomials~\eqref{Equ:SingleSumRecCoef}, $\{-24 N-48,2 N-20\}$ is the list of the first coefficients on the right hand side of~\eqref{Equ:SingleSumRec}, and $\texttt{start}$ tells the procedure that the list of initial values $\{\{2, 1\}, \{0, 1/6\}\}$ from~\eqref{Equ:SingleSumInitial} corresponds to $N=1,2$.

As demonstrated already in Example~\ref{Exp:SingleExt3} the following application is immediate.

\begin{corollary}\label{Cor:ExpansionSolver}
For each nonnegative $N$, let ${\cal S}(\varepsilon,N)$ be an analytic function in $\varepsilon$ throughout an annular region centered at $0$  with the Laurent expansion ${\cal S}(\varepsilon,N)=\sum_{i=t}^{\infty}f_i(N)\varepsilon^i$ for some $t\in\set Z$, and suppose that ${\cal S}(\varepsilon,N)$ satisfies the recurrence~\eqref{Equ:ExpansionEquMod}
with coefficients and inhomogeneous part as stated in Algorithm~\FLSR\ for some $\mu\in\set N$; define $c_{i,k}:=F_i(k)$ for $t\leq i\leq u$ and $\mu\leq k<\mu+d$. Let $(r,\lambda, \sum_{i=t}^rF_i(N)\varepsilon^i)$ be the output of
Algorithm~\FLSR.
Then $f_i(k)=F_i(k)$ for all $t\leq i\leq r$ and all $k\in\set N$ with $k\geq\lambda$.
\end{corollary}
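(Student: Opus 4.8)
The plan is to derive this as a direct corollary of Theorem~\ref{Thm:ExpansionAlg} (the correctness of Algorithm~\FLSR) together with the uniqueness statement of Lemma~\ref{Lemma:UniqueSeqSol}. The only gap to bridge is that Theorem~\ref{Thm:ExpansionAlg} speaks about the unique \emph{formal} solution $T(\varepsilon,N)=\sum_{i=t}^u F_i(N)\varepsilon^i$ of the truncated recurrence~\eqref{Equ:ExpansionEquMod} pinned down by the initial data $c_{i,k}$, whereas the corollary makes a claim about the \emph{analytic} function ${\cal S}(\varepsilon,N)$ and its genuine Laurent coefficients $f_i(N)$. So the core of the argument is to identify these two objects coefficient-wise up to order $u$.

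First I would observe that, since ${\cal S}(\varepsilon,N)$ is analytic in the punctured disc with Laurent expansion $\sum_{i=t}^\infty f_i(N)\varepsilon^i$, and since by hypothesis it satisfies the recurrence~\eqref{Equ:ExpansionEquMod} with the stated polynomial coefficients $a_j(\varepsilon,N)$ and inhomogeneous part (whose first coefficients are $h_t(N),\dots,h_u(N)$), one may substitute the Laurent series into that recurrence. Because $a_d(0,k)\neq 0$ for $k\geq\mu$, the leading coefficient does not degenerate, so comparing coefficients of $\varepsilon^t,\varepsilon^{t+1},\dots,\varepsilon^u$ shows that the truncation $\sum_{i=t}^u f_i(N)\varepsilon^i$ satisfies exactly~\eqref{Equ:ExpansionEquMod} for all $N\geq\mu$. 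Moreover its initial values at $N=\mu,\dots,\mu+d-1$ are by definition $c_{i,k}=F_i(k)=f_i(k)$ for $t\leq i\leq u$. Hence Lemma~\ref{Lemma:UniqueSeqSol} applies and forces $F_i(N)=f_i(N)$ for all $t\leq i\leq u$ and all $N\geq\mu$ (up to the irrelevant first $\mu$ points, which are absorbed into the eventual bound $\lambda\geq\mu$).

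Next I would invoke Theorem~\ref{Thm:ExpansionAlg}: running Algorithm~\FLSR\ on the given data returns $(r,\lambda,\sum_{i=t}^r F_i(N)\varepsilon^i)$ where each $F_i(N)$ ($t\leq i\leq r$) is an indefinite nested product-sum expression that agrees with the true value of the $\varepsilon^i$-coefficient of $T(\varepsilon,N)$ for all $N\geq\lambda$, with $\lambda\geq\mu$. Combining this with the identification $F_i(N)=f_i(N)$ for $N\geq\mu$ from the previous paragraph yields $f_i(k)=F_i(k)$ for all $t\leq i\leq r$ and all $k\geq\lambda$, which is exactly the assertion. The one point requiring a little care — and the place I expect the only real subtlety to lie — is the legitimacy of the coefficient comparison in the first step: one must know that plugging a convergent Laurent series into a recurrence with polynomial-in-$\varepsilon$ coefficients and comparing powers of $\varepsilon$ is valid, i.e.\ that the formal and analytic notions of ``coefficient of $\varepsilon^i$'' coincide here. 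This is standard (the coefficients $a_j(\varepsilon,N)$ are polynomials, so the left-hand side is again a convergent Laurent series whose coefficients are the obvious finite $\set K$-linear combinations of the $f_i(N)$), and the remainder of the proof is bookkeeping already carried out in Lemma~\ref{Lemma:UniqueSeqSol} and Theorem~\ref{Thm:ExpansionAlg}.
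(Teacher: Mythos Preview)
Your proposal is correct and is exactly the intended argument: the paper gives no explicit proof of this corollary, noting only that it is ``immediate'' from Lemma~\ref{Lemma:UniqueSeqSol} and Theorem~\ref{Thm:ExpansionAlg}, which is precisely the route you spell out. The one point you flag as a subtlety (coefficient comparison for a convergent Laurent series against polynomial-in-$\varepsilon$ coefficients) is indeed routine, and the remaining bookkeeping is already contained in those two results.
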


\noindent For further considerations we restrict to the following special case.
We observed --to our surprise-- in almost all examples arising from
Feynman integrals that the operator
\begin{equation}\label{FactorOp}
\sum_{i=0}^da_i(0,N)S_N^i=c(N)(S_N-b_d(N))(S_N-b_{d-1}(N))\dots(S_N-b_1(N))
\end{equation}
with the shift operator $S_N$ factorizes completely for some $b_1,\dots,b_d,c\in\set K(N)$; the rational functions can be computed by Petkov{\v s}ek's algorithm~\cite{Hyper}. In this particular instance we can construct immediately the complete solution space of
\begin{equation}\label{KeyEquGeneric}
a_0(0,N)F(N)+\dots+a_d(0,N)F(N+d)=X(N)
\end{equation}
for a generic sequence $X(N)$. Namely,
choose $\mu_i\in\set N$ such that the numerator and denominator polynomial of $b_i(j)$ have no zeros for all evaluations $j\in\set N$ with $j\geq\mu_i$, and take $\lambda:=\max_{1\leq i\leq d}\mu_i+1$. Now
define for $1\leq i\leq d$ the hypergeometric terms
$h_i(N)=\prod_{j=\lambda}^Nb_i(j-1)$. Then by~\cite{Abramov:94} one gets the $d$ linearly independent solutions

\vspace*{-0.6cm}

\begin{equation}\label{Equ:HomSol}
H_1(N):=h_1(N), \dots,H_d(N):=h_1(N)\sum_{i_1=\lambda}^{N-1}\!\!\frac{h_2(i_1)}{h_1(i_1+1)}\dots\sum_{
i_{d-1}=\lambda}^{i_{d-2}-1}\!\!\frac{h_d(i_{d-1})}{h_d(i_{d-1}+1)}
\end{equation}
of the homogeneous version of~\eqref{KeyEquGeneric}, and the particular solution
\begin{equation}\label{Equ:InHomSol}
P(N):=\frac{h_1(N)}{c(N)}\sum_{i_1=\lambda}^{N-1}\frac{h_2(i_1)}{h_1(i_1+1)}
\dots\sum_{i_{d-1}=\lambda}^{i_{d-2}-1}\frac{h_d(i_{d-1})}{h_{d-1}(i_{d-1}+1)}
\sum_{i_{d}=\lambda}^{i_{d-1}-1}\frac{X(i_{d})}{h_d(i_{d}+1)}
\end{equation}
of~\eqref{KeyEquGeneric} itself. In other words, the solution space of~\eqref{KeyEquGeneric} is explicitly given by
\begin{equation}\label{Equ:SolutionSpace}
\{c_1\,H_1(N)+\dots+c_d\,H_d(N)+P(N)|c_1\dots,c_d\in\set K\};
\end{equation}
here the nesting depth (counting the nested sums) of $H_i$ is $i-1$ and of $P$ is $d$.

Given this explicit solution space~\eqref{Equ:SolutionSpace} we end up with the following result.

\begin{theorem}\label{Thm:FactorRec}
Let $h_t(N),h_{t+1}(N),\dots$ with $t\in\set Z$ be functions that are computable in terms of indefinite nested product-sum expressions
where the nesting depth of the summation quantifiers of $h_i(N)$ is $d_i$; let $a_i(\varepsilon,N)\in\set K[\varepsilon,N]$ be such that the operator factors as in~\eqref{FactorOp} for some $c,b_i\in\set K(N), c\neq0$. If ${\cal S}(\varepsilon,N)=\sum_{i=t}^{\infty}F_i(N)\varepsilon^i$ is a solution
of
\begin{equation}\label{Equ:ExpansionEqu}
a_0(\varepsilon,N){\cal S}(\varepsilon,N)+\dots+a_d(\varepsilon,N){\cal S}(\varepsilon,
N+d)=h_t(N)\varepsilon^{t}+h_{t+1}(N)\varepsilon^{t+1}+\dots,
\end{equation}
for some functions $F_i(N)$, then the values of $F_i(N)$ can be computed by indefinite nested product-sum expressions $\tilde{F}_i(N)$. The depth of the $\tilde{F}_i(N)$ is $\leq\max_{t\leq j\leq i}(d_j+(i-j+1)d))$.
\end{theorem}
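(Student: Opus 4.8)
\section*{Proof proposal for Theorem~\ref{Thm:FactorRec}}

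The plan is to prove this as a refinement of Algorithm~\FLSR\ (Theorem~\ref{Thm:ExpansionAlg}): instead of invoking Problem~\ProblemRS\ as a black box at each iteration, I would feed in the fully explicit solution space~\eqref{Equ:SolutionSpace} that the factorization hypothesis~\eqref{FactorOp} makes available, and then read off the depth bound by bookkeeping along the iterations. No analytic input is needed here, since~\eqref{Equ:ExpansionEqu} is treated purely formally.

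First I would set up the induction exactly as in the proof of Theorem~\ref{Thm:ExpansionAlg}: entering the $r$-th iteration (starting at $r=t$) the invariant~\eqref{Equ:RecExpansion} holds for all $N$ past some bound, with right-hand side coefficients $h_r(N),h_{r+1}(N),\dots$ that are indefinite nested product-sum expressions; let $\delta_r$ denote the maximal nesting depth among those of $h_r(N)$, so $\delta_t=d_t$. Coefficient comparison w.r.t.\ $\ep^r$ turns~\eqref{Equ:RecExpansion} into the scalar recurrence~\eqref{KeyEqu}, i.e.\ into~\eqref{KeyEquGeneric} with $X(N):=h_r(N)$. This is where the factorization enters: since $\sum_{i=0}^d a_i(0,N)S_N^i$ factors as in~\eqref{FactorOp}, formulas~\eqref{Equ:HomSol} and~\eqref{Equ:InHomSol} supply $d$ linearly independent homogeneous solutions $H_1,\dots,H_d$ of depths $0,1,\dots,d-1$ and a particular solution $P$ of~\eqref{KeyEquGeneric} whose $d$ outer sums are wrapped around $X=h_r$, hence of depth $d+\delta_r$. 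Because $F_r$ is itself a solution of~\eqref{KeyEqu}, the homogeneous solution space has dimension $d$ (here $a_d(0,k)\neq0$ for $k\geq\mu$ guarantees free propagation), and the $H_i$ form a basis of it (their Casoratian is nonzero for large $N$), there exist $c_1,\dots,c_d\in\set K$ with $F_r(N)=c_1H_1(N)+\dots+c_dH_d(N)+P(N)$ for all $N$ beyond some $\lambda$; the $c_i$ are obtained by solving the $d\times d$ linear system built from the values $F_r(\lambda),\dots,F_r(\lambda+d-1)$, which are computable from finitely many initial values of $\mathcal S$ via Lemma~\ref{Lemma:UniqueSeqSol}. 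Thus $F_r$ agrees for large $N$ with an indefinite nested product-sum expression $\tilde F_r(N)$ of depth $\le\max(d-1,\,d+\delta_r)=d+\delta_r$.

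Next I would close the induction and extract the estimate. Following Step~6 of Algorithm~\FLSR, substituting $\tilde F_r$ for $F_r$ and moving the $\ep^r$-term of~\eqref{Equ:RecExpansion} to the right produces new coefficients $h_{r+1}(N),\dots$ that are $\set K[N]$-linear combinations of argument shifts of $\tilde F_r$ together with the original $h_{r+1}(N),\dots$; since multiplication by polynomials in $N$ and shifts of the argument leave the nesting depth unchanged, $\delta_{r+1}\le\max(d_{r+1},\,d+\delta_r)$. Unrolling this recursion from $\delta_t=d_t$ yields $\delta_i\le\max_{t\le j\le i}(d_j+(i-j)d)$, whence the depth of $\tilde F_i$ is at most $d+\delta_i\le\max_{t\le j\le i}\bigl(d_j+(i-j+1)d\bigr)$, which is the asserted bound.

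I expect the main obstacle to be the middle step: verifying that the explicit solutions~\eqref{Equ:HomSol}--\eqref{Equ:InHomSol} genuinely span the full affine solution space of~\eqref{KeyEqu}, so that the given $F_r$ must be of the form $\sum_i c_iH_i+P$, and that the $c_i$ are effectively computable from initial data --- this is precisely where linear independence (non-vanishing Casoratian) of the $H_i$ and the hypothesis $a_d(0,k)\ne0$ for $k\ge\mu$ combine to make the matching of $d$ consecutive values both possible and uniquely solvable. The depth accounting is then routine, once one observes that the only depth-increasing operation in the whole construction is the formation of the $d$ nested sums in~\eqref{Equ:InHomSol}.
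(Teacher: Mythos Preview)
Your proposal is correct and follows essentially the same route as the paper's proof: run the iteration of Algorithm~\FLSR, but at each step replace the black-box solver by the explicit affine solution set~\eqref{Equ:SolutionSpace} coming from the factorization~\eqref{FactorOp}, match the constants $c_1,\dots,c_d$ against $d$ consecutive initial values of $F_r$, and conclude by induction. The paper's proof is terser---it simply says ``the bound on the depth is immediate''---whereas you actually carry out the depth bookkeeping via the recursion $\delta_{r+1}\le\max(d_{r+1},\,d+\delta_r)$ and unroll it to the stated estimate; this extra detail is a welcome addition rather than a deviation.
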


\begin{proof}
Choose $\mu\in\set N$ with $\mu\geq d$ such that $a_d(k)\neq0$ for all integers $k\geq\mu$ and such that the sequences $h_i(k)$ can be computed for indefinite nested product-sum expressions for each $k\geq\mu$.
Consider the $r$th iteration of the loop of Algorithm~\FLSR. Since $F_r(N)$ is a solution of~\eqref{KeyEquGeneric} with $X(N)=h_r(N)$ for all $N\geq\gamma$, $F_r(N)$ is a linear combination of~\eqref{Equ:SolutionSpace}. Taking the first $d$ initial values $F_r(\mu),\dots,F_r(\mu+d-1)$ the $c_i$ are uniquely determined. Induction on $r\in\set N$ proves the theorem. The bound on the depth is immediate.
\end{proof}

\noindent If the operator~\eqref{KeyEqu} factorizes as stated in~\eqref{FactorOp}, Alg.~\FLSR\ can be simplified as follows.

\smallskip

\noindent\textsf{Simplification~1.} The factorization~\eqref{FactorOp} needs to be computed only once and the solutions $F_i(N)$ can be obtained in terms of indefinite nested product-sum expressions by simply plugging in the results of the previous steps. E.g., for our running example, we get the generic solution

\vspace*{-0.8cm}

\begin{equation}\label{Equ:GenericSolSingleSum}
\frac{c_1}{N (N+2)}+c_2\frac{\displaystyle\sum_{i_1=1}^N \tfrac{-(-1)^{i_1}(2 i_1+1)}{i_1 \big(i_1+1\big)}}{2 N (N+2)}
-\frac{\displaystyle\sum_{i_1=1}^N\tfrac{(-1)^{i_1}(2 i_1+1)}{i_1 \big(i_1+1\big)}\sum_{i_2=1}^{i_1} \
\tfrac{(-1)^{i_2} i_2^2 X\big(i_2-2\big)}{\big(2 i_2-1\big) \big(2 i_2+1\big)}}{2 N (N+2)}
\end{equation}
of the recurrence
$a_0(0,N)F(N)+a_1(0,N) F(N+1)+a_2(0,N)F(N+2)=X(N)$
where the coefficients are defined as in~\eqref{Equ:SingleSumRecCoef}. In this way, one gets the solution $F_0(N)$ in terms of a double sum by setting $c_1=c_2=0$ and $X(i_2)=-24 i_2+48$ in~\eqref{Equ:GenericSolSingleSum}, i.e.,
\begin{equation}\label{Equ:NSSingleSumF0}
F_0(N)=\frac{-1}{2 N (N+2)}\sum_{i_1=1}^N \frac{(-1)^{i_1}(1+2 \
i_1)}{i_1(1+i_1)}\sum_{i_2=1}^{i_1}
\frac{-(-1)^{i_2}24i_2^3}{\big(-1+2 i_2\big) \big(1+2 i_2\big)}.
\end{equation}
One step further, one gets the solution $F_1(N)$ in terms of a quadruple sum by setting $c_1=c_2=0$ and plugging the double sum expression
$$X(i_2)=2i_2-20-\text{coeff}(a_0(\varepsilon,i_2)F_0(i_2)+a_1(\varepsilon,i_2)F_0(i_2+1)
+a_2(\varepsilon,i_2)F_0(i_2+2),\varepsilon)$$
into~\eqref{Equ:GenericSolSingleSum}.
Similarly, one obtains a sum expressions of $F_2(N)$ with nesting depth 6.

\smallskip

\noindent\textsf{Minimizing the nesting depth.} Given such highly nested sum expressions, the summation package \SigmaP\ finds alternative sum representations with minimal nesting depth. The underlying algorithms are based on a refined difference field theory worked out in~\cite{Schneider:08c,Schneider:11} that is adapted from Karr's $\Pi\Sigma$-fields originally introduced in~\cite{Karr}. E.g., with this machinery, we simplify the double sum~\eqref{Equ:NSSingleSumF0} to~\eqref{Equ:SingleSumF0}, and we reduce the quadruple sum expression for $F_1(N)$ to expressions in terms of single sums~\eqref{Equ:SingleSumF1}.

\smallskip

\noindent\textsf{Simplification 2:} The solutions~\eqref{Equ:HomSol} of the homogeneous version of the recurrence~\eqref{KeyEquGeneric} can be pre-simplified to expressions with minimal nesting depth by the algorithms mentioned above. Moreover, using the algorithmic theory described in~\cite{KS:06} the algorithms in~\cite{Schneider:08c} can be carried over to the sum expressions like~\eqref{Equ:InHomSol} involving an unspecified sequence $X(i_{d})$. With this machinery, \eqref{Equ:GenericSolSingleSum} simplifies to
\small
$$\frac{c_1}{N (N+2)}+\frac{c_2(-1)^{N+1}}{2 N (N+1) (N+2)}-\frac{\sum_{i_1=1}^N \tfrac{i_1 X(i_1-2)}{(2 i_1-1)
(2 i_1+1)}}{2 N (N+2)}-\frac{(-1)^N \sum_{i_1=1}^N \tfrac{(-1)^{i_1} \
i_1^2 X(i_1-2)}{(2 i_1-1)(2 i_1+1)}}{2 N (N+1) \
(N+2)}.
$$
\normalsize
Performing this extra simplification, the blow up of the nesting depth for the solutions $F_0(N),F_1(N),F_2(N),\dots$ reduces considerably: instead of nesting depth $2,4,6,\dots$ we get the nesting depths $1,2,3,\dots$. In particular, given these representations the simplification to expressions with optimal nesting depth in Step~2 also speeds up.

\smallskip

For simplicity we assumed that the $a_i(\varepsilon, N)$ are polynomials in $\varepsilon$. However, all arguments can be carried over immediately to the situation where the $a_i(\varepsilon, N)$ are formal power series
with the first coefficients given explicitly.
Moreover, our algorithm is applicable for more general sequences $a_i(N)$ and $h_i(N)$ whenever there are algorithms available that solve problem~\ProblemRS. E.g., if the coefficients $a_i(N)$ itself are expressible in terms of indefinite nested product-sum expression, problem~\ProblemRS\ can be solved by~\cite{ABPS:11}, and hence Algorithm~\FLSR\ is executable.

\subsection{An effective method for multi-sums}\label{Sec:EffectiveMethod}

For a multi-sum ${\cal S}(\varepsilon,N)$ with the properties (1)--(5) from  Assumption~\ref{Assum:SumProp} and with the assumption that it has a series expansion~\eqref{Equ:FExp2} for all $N\geq\lambda$ for some $\lambda\in\set N$,
the ideas of the previous section can be carried over as follows.

\smallskip

\noindent\textsf{{Step 1: Finding a recurrence.}} By WZ-theory~\cite[Cor.~3.3]{WZtheory} and ideas given in~\cite[Theorem~3.6]{Wegschaider} it is guaranteed that there is a recurrence
of the form
    \begin{equation}\label{Equ:HomRec}
    a_0(\varepsilon,N){\cal S}(\varepsilon,N)+\dots+a_d(\varepsilon,N){\cal S}(\varepsilon,N+d)=0
    \end{equation}
with coefficients $a_i(\varepsilon,N)\in\set K[\ep,N]$
    for the multi-sum ${\cal S}(\varepsilon,N)$ in $N$ that can be computed, e.g., by Wegschaider's algorithm; for infinite sums similar arguments have to be applied as in Step~2.2 of Section~\ref{Sec:FindRecurrences}. Given such a recurrence, let $\mu\in\set N$ with $\mu\geq\lambda$ such that $a_d(0,N)\neq0$ for all $N\in\set N$ with $N\geq\mu$.

\smallskip

\noindent\textsf{{Step 2: Determining initial values.}} If the sum~\eqref{physics-sums-general} contains no infinite sums, i.e., $s=0$, the initial values $F_i(k)$ in ${\cal S}(\varepsilon,k)=\sum_{i=t}^{\infty}F_i(k)\varepsilon^i$ for $k=\mu,\mu+1,\dots$
    can be computed immediately and can be expressed usually in terms of rational numbers. However, if infinite sums occur, it is not so obvious to which values these infinite sums evaluate for our general input class-- by assumption we only know that the $F_i(k)$ for a specific integer $k\geq\mu$ are real numbers. At this point we emphasize that our approach works regardless of whether we express these sums in terms of well known constants or we just keep the symbolic form in terms of infinite sums.
    In a nutshell, if we do not know how to represent these values in a better way, we keep the sum representation.
However, whenever possible it is desirable to rewrite these sums in terms of known values or special functions. Examples are harmonic sums which are known as limits for the external index $N \rightarrow \infty$, see~\cite{BK,Vermaseren:99},
to yield Euler-Zagier and multiple zeta values, cf.~\cite{MZV} and references therein, and generalized harmonic sums, see~\cite{Moch:02} which give special values of $S$-sums. In massive 2-loop computations and for the simpler 3-loop topologies these are
the only known classes, whereas extensions are known in case of more massive lines, cf.\ e.g.~\cite{BROAD}.

\smallskip

\noindent\textsf{{Step 3: Recurrence solving.}}
Given such a recurrence~\eqref{Equ:HomRec} together with the initial values of ${\cal S}(\varepsilon,N)$ (hopefully in a nice closed form) we can activate  Algorithm~\FLSR. Then by Corollary~\ref{Cor:ExpansionSolver}, we have a procedure that decides if the first coefficients of the expansion are expressible in terms of indefinite nested product-sum expressions.

\smallskip

Summarizing, we obtain Theorem~\ref{Thm:AlgForMultiSum} stated already in the beginning of this section. As mentioned already in the introduction, the proposed algorithm (see steps 1,2,3 from above) is not feasible for our examples arising form particle physics: forcing Weg\-schai\-der's implementation to find a homogeneous recurrence is extremely expensive
and usually fails due to the insufficient computational resources. Subsequently, we relax this restriction and search for recurrence relations which are not necessarily homogeneous.

\section{Finding recurrence relations for multi-sums}\label{Sec:FindRecurrences}
Given a multi-sum ${\cal S}(N)$ of the form (\ref{physics-sums-general}) we present a general method to compute a linear recurrence of ${\cal S}(N)$.
Here the challenge is to deal with infinite sums and summands which are not well defined outside the summation
range. We proceed as follows.

\medskip

\noindent\textsf{{Step~1: Finding a summand recurrence}.} The sum
(\ref{physics-sums-general}) fits the input class of the algorithm
\cite{Wegschaider}, an extension of multivariate WZ-summation due to
\cite{WZtheory}. This allows us to compute a recurrence for the
hypergeometric summand of~\eqref{physics-sums-general}.
Before giving further details, we recall that an expression
$\mathcal{F}\left( N,\sigma, j, \ep\right)$ is called hypergeometric in $N,\sigma,j$,
if there are rational functions $r_{\nu,\mu,\eta}(N,\sigma, j, \ep)\in\set K(N,\sigma,j,\ep)$
such that
$\frac{\mathcal{F}(N,\sigma, j, \ep)}{\mathcal{F}(N+\nu,\sigma+\mu,
j+\eta, \ep)}=
r_{\nu,\mu,\eta}(N,\sigma, j, \ep)$ at the points $(\nu,\mu,\eta) \in
\mathbb{Z}^{r+s+2}$  where this ratio is defined.
Then the Mathematica package {\tt MultiSum} described in \cite{Wegschaider} solves the following problem by coefficient comparison and solving the underlying system of linear equations.

\smallskip

\noindent \textbf{Given} a hypergeometric term $\mathcal{F} \left( N, \sigma, j, \ep\right)$, a finite structure set $\mathbb{S} \subset \mathbb{N}^{s+r+2}$ (w.l.o.g.\ we restrict to positive shifts) and degree bounds $B\in\set N$, $\beta\in\set N^s$, $b\in\set N^{r+1}$.\\
\textbf{Find}, if possible, a recurrence of the form
\begin{equation}\label{non-kfree-Feynman}
\sum\limits_{\left( u,v, w\right)\in
\mathbb{S}}
c_{u,v,w}\left( N, \sigma, j, \ep\right) \mathcal{F} \left( N+u, \sigma
+v, j+w, \ep\right) =0
\end{equation}
with polynomial coefficients $c_{u,v,w}\in\set K[N,\sigma,j,\ep]$, not all zero, where the degrees of the variables $N$, $j_i$ and $\sigma_i$ are bounded by $B$, $\beta_i$ and $b_i$, respectively.

\smallskip

\begin{remark}\label{remark:AlgorithmSetUp}
\textbf{(1)}
In general, choosing $\set S$ large enough, there always exists a summand recurrence~\eqref{non-kfree-Feynman} for proper hypergeometric summands $\mathcal{F}$ (see Footnote~\ref{Footnote:Proper}) due to~\cite{WZtheory}.
In all our computations we found such a recurrence by setting the degree bounds to $1$, i.e., $B=\beta_i=b_i=1$.\\
\textbf{(2)} To determine a small structure set $\mathbb{S}
\subseteq \mathbb{N}^{s+r+2}$ which provides a solution w.r.t.\ our fixed degree bounds, A.~Riese and B.~Zimmermann enhanced the package \texttt{MultiSum} by a method based on modular computations. In this way one can loop through possible choices inexpensively until one succeeds to find such a recurrence~\eqref{non-kfree-Feynman}.
\end{remark}

\vspace*{-0.2cm}

Next, the algorithm successively divides the polynomial
recurrence operator  (\ref{non-kfree-Feynman}) by all forward-shift
difference operators
\begin{equation*}\label{define-Delta}
\Delta_{\sigma_i}\mathcal{F}(N, \sigma, j, \ep):= \mathcal{F}\left(
N,\sigma_1,\dots,\sigma_i+1, \dots ,\sigma_s, j,\ep\right) -
\mathcal{F}(N, \sigma, j, \ep)
\end{equation*}
for $ 1 \leq i \leq s$, as well as by similar $\Delta$-operators defined for
the variables from $j_i$ which have finite summation bounds.

At last we obtain an operator free of shifts in the summation variables
$(\sigma, j)$ called the principal part of the recurrence
(\ref{non-kfree-Feynman}) which equals the sum of all delta parts in the
summation variables from $(\sigma, j)$, i.e.,
\begin{alignat}1\nonumber
\sum\limits_{m \in \mathbb{S}' }   & a_m (\ep, N)
\mathcal{F} (N+m, \sigma, j, \ep) =\sum\limits_{l=0}^{r} \Delta_{j_l} \bigg(
\sum\limits_{( m,n) \in \mathbb{S}'_{l}} d_{m,n} ( N,\sigma,j,\ep)
\mathcal{F} (N+m, \sigma, j+n, \ep)\bigg)
\\[-0.2cm]
 &+ \sum\limits_{i=1}^s \Delta_{\sigma_i} \bigg(
\sum\limits_{( m,k,n)\in \mathbb{S}_i} b_{m,k,n} ( N,\sigma,j,\ep)
\mathcal{F}(N+m, \sigma+k, j+n, \ep) \bigg)\label{certific-Feynman}
\end{alignat}
where the coefficients $a_m$, usually not all zero (see Remark~\ref{Remark:DegreeBounds}.2), $b_{m,k,n}$ and $d_{m,n}$ are
polynomials and the sets $\set S'\subset\set N$, $\set S_i\subset\set N^{s+r+2}$ and $\set S'_l\subset\set N^{r+2}$ are finite.
Recurrences of the form  (\ref{certific-Feynman}) satisfied by the
hypergeometric summand are called
certificate recurrences and have polynomial
coefficients $a_{m}\left(\ep, N\right)$ free of the summation variables
from $(\sigma, j)$, while the coefficients of the delta-parts are polynomials
involving all variables.

\begin{remark}\label{Remark:DegreeBounds}
\textbf{(1)} In principle, the degrees of the polynomials $b_{m,k,n}$ and $d_{m,n}$ arising in~\eqref{certific-Feynman} can be chosen arbitrarily large w.r.t.\ $\sigma_i$ and $j_i$. However, in Step~2 we will sum~\eqref{certific-Feynman} over the input range and hence we have to guarantee that the resulting sums over~\eqref{certific-Feynman} are well defined. As a consequence, the degrees of the $d_{m,n}$ and $b_{m,k,n}$ w.r.t.\ the variables $\sigma_i$ have to be chosen carefully if in (\ref{limitconsiderations}) one of the constants $c_i$ is zero. As mentioned earlier, for such situations we restrict ourselves to the case $s=1$. In this case, the degree in the $b_{m,k,n}$ should be smaller than the constant $d_1$ from~(\ref{limitconsiderations}) and the degree in the $d_{m,n}$ should be not bigger than the constant $o$ from~\eqref{Equ:SumDefined}. To control this total bound $b:=\min(d_1-1,o)$, we exploit the following observation~\cite[p.~43]{Wegschaider}:
While transforming~\eqref{non-kfree-Feynman} to~\eqref{certific-Feynman} by dividing through the operators~\eqref{define-Delta}, one only has to perform a simple sequence of additions of the occurring coefficients in~\eqref{non-kfree-Feynman}, and thus the degrees w.r.t.\ the variables do not increase. Summarizing, if we choose $\beta_1$ in our ansatz such that $\beta_1<b$, the degrees in the $b_{m,k,n}$ and $d_{m,n}$ w.r.t.\ the variable $\sigma_1$ are smaller than $b$.\\
\textbf{(2)} In general, it might happen that the principal part is $0$, i.e., we get a trivial remainder within the operator divisions.
In~\cite[Thm.~3.2]{Wegschaider} this situation was resolved at the
cost of increasing the degrees w.r.t. some of the variables.
If within this construction the degree w.r.t.\ $\sigma_1$ increases too much, manual adjustment is needed (e.g., force the structure set to be different or change the degree bounds manually). However, this exotic case never occurred within our computations.
\end{remark}

\begin{example}
For the sum
\small
\begin{equation} \label{Ex-75}
 \mathcal{S}\left(\ep,N\right):=
\sum\limits_{j_0=0}^{N-3}
\sum\limits_{j_1 =0}^{N-3-j_0}
\underbrace{(-1)^{j_1} (j_1+1)\binom{N-2-j_0}{j_1+1}
 \frac{\Gamma (j_0+j_1+1)
    \left(1-\frac{\ep}{2}\right)_{j_0}
    \left(3-\frac{\ep}{2}\right)_{j_1}}{
(4-\ep)_{j_0+j_1} \left(\frac{\ep}{2}+4\right)_{j_0+j_1}}}_{=:\mathcal{F}(N,j_0,j_1)}
\end{equation}
\normalsize
with the discrete parameter $N \geq 3$ and $\ep >0$ the package \texttt{MultiSum} computes the summand recurrence
\begin{multline}\label{Equ:Certificate}
(\ep -2 N) N {\cal F}(N,j_0,j_1)-(\ep -N-3) (\ep +2
    N+2) {\cal F}(N+1,j_0,j_1)\\
= \Delta_{j_0} [(\ep
    ^2+j_0 \ep +\ep -2 j_1-2 j_0 N-4 j_1
    N-12 N-6) {\cal F}(N+1,j_0,j_1)]\\
    +\Delta_{j_1}[(\ep -2 N)(j_0+j_1-N+1)
    {\cal F}(N,j_0,j_1)\\
    +(-2 N^2+\ep  N+2 j_0 N+4
    j_1 N+4 N-2 \ep -\ep  j_0+2 j_1)
    {\cal F}(N+1,j_0,j_1]).
\end{multline}
\end{example}

\medskip

\noindent\textsf{{Step~2: A recurrence for the sum}}.
Taking as input the certificate recurrences (\ref{certific-Feynman}) we algorithmically find
the inhomogeneous part of the recurrence satisfied by the sum
(\ref{physics-sums-general}) which will contain special instances of the
original multi-sum of lower nesting
depth.
\smallskip

The recurrence for the multi-sum
(\ref{physics-sums-general}) is obtained by summing the certificate recurrence
(\ref{certific-Feynman}) over all variables from $(\sigma, j)$ in the given
summation range $\mathcal{R} \subseteq
\mathbb{Z}^{s+r+1}$.
Since it can be easily checked whether the
summand $\mathcal{F}$ satisfies the~(\ref{certific-Feynman}), the certificate recurrence also provides an
algorithmic proof of the recurrence for the multi-sum $\mathcal{S}(N,\ep)$. In particular, since we set up the degrees of the coefficients in~(\ref{certific-Feynman}) w.r.t.\ the variables accordingly, see Remark~\ref{Remark:DegreeBounds}, it follows that the resulting sums are analytically well defined.

To pass from the certificate recurrence to a homogeneous or inhomogeneous
recurrences for the sum, special emphasis has to be put on the
$\Delta$-operators. In particular, the finite summation bounds appearing in
(\ref{physics-sums-general}) lead to an inhomogeneous right hand side after
summing over the summand recurrence~\eqref{certific-Feynman}.
A method to set up the inhomogeneous
recurrences for the summation problems (\ref{physics-sums-general}) was
introduced in \cite[Chapter 3]{Stan-thesis}.
We summarize the steps of this approach implemented in
the package \texttt{FSums}.

\noindent In this context, we use tuples to denote multi-dimensional
intervals. The range represented by the tuple interval $[i,k]$ is the
Cartesian product of the intervals defined by the components $i,k \in
\mathbb{Z}^n$. More
precisely,
$[i,k]:= [i_1,k_1] \times [i_2,k_2] \times \cdots
	\times [i_n ,k_n]$ where $[i_j,k_j]=\{i_j,i_j+1,\dots,k_j\}$.
Often when working with nested sums, summation ranges for inner sums will
depend on the value of a variable for an outer sum. Intervals whose endpoints
are defined by tuples are not enough to represent the summation ranges for
these sums. We will use a variant of the cartesian product notation to
denote such a summation range. Namely, to refer to
a variable associated to a range, we will specify it as a subscript at the
corresponding interval and use $\ltimes$ signs instead of the $\times$ symbols.
For example, the range for the
sum (\ref{Ex-857}) can be written as
$[0,\infty) \times [0,N-3]_{j_0} \ltimes [0,N-j_0-3] \ltimes [0,j_0 +1].$
We also introduce this notation for
the initial range of the sum (\ref{physics-sums-general}) as
\begin{equation}\label{input-range}
 \mathcal{R} := \mathcal{R}_\sigma \times \mathcal{R}_j
\end{equation}
\noindent where $
\mathcal{R}_\sigma := [p,\infty)$ and
$
\mathcal{R}_j  = [q_0,L_1(N)] \ltimes \dots
\ltimes [q_r,L_r(N,j_0,\dots,j_{r-1})]$, are the infinite and the finite
range, respectively.

\noindent\textsf{Step~2.1: Refining the input sum.}
As indicated earlier, we consider the summands from (\ref{physics-sums-general}) as
well-defined only inside the
initial input range
$
 \mathcal{R} \subseteq
D_\mathcal{F}
$
where $D_\mathcal{F}$ denotes the set of well-defined values for the
proper hypergeometric function $\mathcal{F}$. Because of this restriction we
need to determine a possible smaller summation range over which we are allowed
to sum the certificate recurrences (\ref{certific-Feynman}).

\vspace*{-0.2cm}

\begin{example}
We illustrate this phenomenon by our concrete example~(\ref{Ex-75}).
Let us start by summing over the initial summation range
$\mathcal{R} = [0,N-3]_{j_0} \ltimes [0,N-3-j_0]$
over the delta parts on the right hand side of the recurrence~\eqref{Equ:Certificate}
which is of the form (\ref{certific-Feynman}).
For this we denote the polynomial  coefficients inside the delta parts
$\Delta_{j_0}$ and $\Delta_{j_1}$ with $ e(N,j_0,j_1, \ep)$ and $
d_1(N,j_0,j_1,\ep)$,
$d_2(N,j_0,j_1, \ep)$, respectively.
By summing over the first term inside the $\Delta_{j_1}$-part and using the
telescoping property, we have

\vspace*{-0.5cm}

\begin{alignat*}1
\sum\limits_{j_0=0}^{N-3} &\sum\limits_{j_1 =0}^{N-3-j_0} 	
\Delta_{j_1} [ d_1(N,j_0,j_1, \ep) {\cal F}(N,j_0,j_1)]
 =
\sum\limits_{j_0 =0}^{N-3}
\left( d_1(N,j_0,j_1, \ep) {\cal F}(N,j_0,j_1)\right)
\Big|_{j_1=0}^{j_1=N-2-j_0}
\\[-0.2cm] &=
\sum\limits_{j_0 =0}^{N-3}  d_1(N,j_0,N-2-j_0, \ep) {\cal F}(N,j_0,N-2-j_0) -
\sum\limits_{j_0 =0}^{N-3}  d_1(N,j_0,0, \ep) {\cal F}(N,j_0,0)
\end{alignat*}

\vspace*{-0.2cm}

\noindent where we use the short-hand notation
$\sum_{k=0}^{l}\mathcal{F} ( k, l )
\big|_{l=A}^{l=B} := \sum_{k=0}^{B} \mathcal{F}(k, B)
-\sum_{k=0}^{A} \mathcal{F}(k, A)$.
We observe that, after telescoping, the upper bound $N-2-j_0$ for $j_1$
translates into a term outside the original summation range. To work under
the assumption that our summand ${\cal F}(N,j_0,j_1)$ is well-defined only inside its
range $\mathcal{R}$, we need to adjust the range over which we sum the
certificate recurrence or shift this relation with respect to the free parameter
$N$. As discussed in \cite[Chapter 3]{Stan-thesis}, the approach based on
computing a smaller admissible summation range is more efficient since it leads
to fewer new sums in the inhomogeneous parts of the recurrences.\\
In the case of our example $\mathcal{S}(\ep,N)$, we consider the
new range
$\mathcal{R}' =  [0, N-4]_{j_0}\ltimes [0,N-j_0-4].$
As a consequence we compute separately a single sum which was called
in \cite[Chapter 3]{Stan-thesis} a sore spot,

\vspace*{-0.3cm}

\begin{equation}\label{output-GenSoreSpots}
\mathcal{S}(\ep,N) = \sum\limits_{j_0=0}^{N-4} \sum\limits_{j_1
=0}^{N-4-j_0} 	 {\cal F}(N,j_0,j_1)  + \sum\limits_{j_0=0}^{N-3}   {\cal F}(N,j_0,N-j_0-3).
\end{equation}
\end{example}

In general, the package {\tt FSums} contains an algorithm that determines the
inevitable summation range and computes the necessary
sore spots for sums of the form (\ref{physics-sums-general}); these extra sums with lower nesting depth have to be considered separately (see also the DIVIDE step in our method described in Section~\ref{Sec:EfficientApproach}).
Subsequently, we denote the sum over the restricted range $\mathcal{R}'$ by
$\mathcal{S}'( \ep,N)$.

\medskip

\noindent\textsf{Step~2.2: Determining the inhomogeneous part of the recurrence.}
Summing a certificate recurrence of the form (\ref{certific-Feynman}) over
the restricted range $\mathcal{R}'$ determined in the previous step leads to a
recurrence for the new sum $\mathcal{S}'(\ep,N)$. The inhomogeneous part
contains special instances of this sum of lower nesting depth. Next, we introduce the
types of sums appearing on the right hand side.

\smallskip

\noindent\textsf{Step~2.2.1: The finite summation bounds.}
Shift compensating sums are the first side-effect of nonstandard summation
bounds. They appear when we sum over the left hand side of the recurrence over
a given definite range, because our upper summation bounds depend on the other
summation parameters.

\begin{example}
Subsequently, we will illustrate these aspects with our running example (\ref{Ex-75}). As deduced from Step~2.1, we continue from now on
with the new sum

\vspace*{-0.3cm}

\begin{equation}\label{Equ:NewSum}
 \mathcal{S}'(\ep,N) = \sum\limits_{j_0=0}^{N-4} \sum\limits_{j_1
=0}^{N-4-j_0} 	 {\cal F}(N,j_0,j_1).
\end{equation}

\vspace*{-0.cm}

\noindent When we sum the certificate recurrence~\eqref{Equ:Certificate} over
the restricted range $\mathcal{R}'$, we obtain

\vspace*{-0.4cm}

\begin{equation}\label{output-compensate-terms}	
\sum\limits_{j_0=0}^{N-4} \sum\limits_{j_1 =0}^{N-4-j_0}
{\cal F}(N+1,j_0,j_1) = \mathcal{S}'(\ep,N+1) - \sum\limits_{j =0}^{N-3}
{\cal F}(N+1,j,N-3-j).
\end{equation}
\end{example}

Compensating sums of this form  appear
only in the case of upper summation bounds depending on the free variable $N$.
After summing over the left hand side
of the recurrence, we will move
the resulting compensating sums, with a change of sign, to the inhomogeneous
part.

\begin{example}
Including the new shifted sum as the
first term of the output, the following
procedure of \texttt{FSum} delivers the right hand side of (\ref{output-compensate-terms})
\begin{mma}
\In |ShiftCompensatingSums|[F[N,j_0, j_1],
\{\{j_0,0,N-4\},\{j_1,0,N-4-j_0\}\},N,1]\\

\vspace*{-0.2cm}

\Out \text{SUM}[N+1] + \text{FSum}[ -F[1 + N, j_0, -3 - j_0 + N], \{\{j_0, 0,
-3 + N\}\}].\\
\end{mma}
\noindent Note that we use the structure {\tt FSum} to store sums with
nonstandard
boundary conditions of the form (\ref{physics-sums-general}). This data type
contains two components, the summand and a list structure for the
summation range. The nested range is stored in the order given in
(\ref{physics-sums-general}), starting with the infinite sums and ending with
the sums with finite summation bounds in the order of their dependence.
\end{example}

When summing over the $\Delta$-parts we generate two types
of sums on the right side of the recurrence, the $\Delta$-boundary sums and
the so-called telescoping compensating sums.

\begin{example}
When summing over the
$\Delta_{j_0}$-part of the recurrence~\eqref{Equ:Certificate}, we get

\vspace*{-0.5cm}
\small
\begin{multline*}
\sum\limits_{j_0=0}^{N-3}
\sum\limits_{j_1 =0}^{N-3-j_0} 	
\Delta _{j_0} [ e(N,j_0,j_1, \ep) {\cal F}(N+1,j_0,j_1)]\\
=\sum\limits_{j_0=1}^{N-2} \sum\limits_{j_1 =0}^{N-2-j_0}  	
 e(N,j_0,j_1, \epsilon) F(N+1,j_0,j_1)-\sum\limits_{j_0=0}^{N-3} \sum\limits_{j_1 =0}^{N-3-j_0} 	
 e(N,j_0,j_1, \epsilon) F(N+1,j_0,j_1).
\end{multline*}

\vspace*{-0.cm}

\normalsize
\noindent Now one sees that exactly the sum with the summation index $j_0$ cancels and one obtains

\vspace*{-0.4cm}

\small
$$\left. \sum\limits_{j_1 =0}^{N-3-j_0}
( e(N,j_0,j_1, \ep) {\cal F}(N+1,j_0,j_1)) \right|_{j_0=0}^{j_0=N-2}+
\sum\limits_{j_0=1}^{N-2} e(N,j_0,N-2-j_0, \ep) {\cal F}(N+1,j_0,N-2-j_0).$$

\normalsize
\end{example}

\vspace*{-0.2cm}

Because of the structure of the summation bounds for the nested sums
(\ref{physics-sums-general}) we can use again our procedure \texttt{ShiftCompensatingSums} to generate the shift compensating sums and to read off the telescoping compensating sums. This connection becomes
clearer when we consider the more involved sum
(\ref{Ex-857}) (with its restricted range $N-4$ instead of its original range $N-3$) and apply, e.g., the $\Delta_{j_0}$-operator:

\vspace*{-0.6cm}

\begin{alignat}1\label{ex-shift-comp-terms}\nonumber
\sum\limits_{ \sigma_0 =0}^{\infty}
&\sum\limits_{ j_0=0}^{N-4}
\sum\limits_{ j_1=0}^{N-j_0-4}
\sum\limits_{ j_2=0}^{j_0} \Delta_{j_0}\left[{\cal F}(N, \sigma_0, j_0,
j_1, j_2) \right]
=  \left. \sum\limits_{ \sigma_0 =0}^{\infty}
\sum\limits_{ j_1=0}^{N-j_0-4}
\sum\limits_{ j_2=0}^{j_0} {\cal F}(N, \sigma_0, j_0,
j_1, j_2) \right|_{j_0=0}^{j_0=N-3}
\\[-0.2cm] &\nonumber
+ \sum\limits_{\sigma_0 =0}^{\infty}
\sum\limits_{j_0=1}^{N-3}
\sum\limits_{j_2=0}^{j_0-1}{\cal F}(N, \sigma_0, j_0,
N-j_0-3, j_2)
- \sum\limits_{\sigma_0 =0}^{\infty}
\sum\limits_{j_0=1}^{N-3}
\sum\limits_{j_1=0}^{N-j_0-4}{\cal F}(N, \sigma_0, j_0,
j_1, j_0);
\end{alignat}
note that the first element on the right side of this identity produces the
$\Delta$-boundary sums while the last two are due to telescoping
compensation. More precisely, with
\begin{mma}
\In |ShiftCompensatingSums|[ F[N, \sigma_0, j_0 -1, j_1,j_2],
\{ \{ \sigma_0, 0, \infty \}, \{ j_1, 0, N-j_0-4 \}, \hfill \break \{ j_2,0,j_0
\} \} /. j_0
\rightarrow (j_0 -1), j_0, 1 ]\\

\vspace*{-0.2cm}

\Out
\{
|FSum| [ F[N, \sigma_0, j_0, j_1, j_2],
 \{ \{\sigma_0, 0, \infty\}, \{j_1, 0, N-4 - j_0 \}, \{j_2, 0,
j_0\}\}],
|FSum| [F[N, \sigma_0, j_0, N-3 - j_0 , j2],
   \{\{\sigma_0, 0, \infty\}, \{j_2, 0,  j_0-1\}\}],
|FSum| [-F[N, \sigma_0, j_0, j_1, j_0],
    \{\{\sigma_0, 0, \infty\}, \{j_1, 0, N-4 - j_0 \}\}]
\} \\
\end{mma}
\noindent we obtain exactly this result: the delta boundary sums are obtained by evaluating the
first entry of the output for $j_0=0$ and $j_0 = N-3$ and the compensating sums result by
adding the shifted sum $[1,N-3]_{j_0}$ to the range of the other terms in
the output. A detailed description of these computations can be found in~\cite[Alg.~4]{Stan-thesis}.

\smallskip

\noindent\textsf{Step~2.2.2: The infinite summation bounds.}
To sum over the delta parts in~\eqref{certific-Feynman} coming from the summation variables
$\sigma_i$, e.g.,
$\Delta_{\sigma_i} b_{m,k,n}( N,\sigma,j,\ep)
\mathcal{F}(N+m, \sigma+k, j+n, \ep)$ we have to ensure that
$\lim_{\sigma_i\to\infty}b_{m,k,n}( N,\sigma,j,\ep)
\mathcal{F}(N+m, \sigma+k, j+n, \ep)$
exists. Looking at the asymptotic conditions (\ref{limitconsiderations}) of the input sum (\ref{physics-sums-general}), there will be no problem if $c_i>0$. However, if the constant $c_i$ is zero, we need to verify that the degrees
of the polynomial coefficients $b_{m,k,n}$ appearing in the respective
$\Delta_{\sigma_i}$-part are smaller than the bound $\beta_i$. As worked out in Remark~\ref{Remark:DegreeBounds} this property is guaranteed by our ansatz.

\medskip

The above sections introduced the types of sums, i.e., shift and telescoping
compensating sums as well as delta boundary sums,  which will appear on the
right hand side of the inhomogeneous recurrences satisfied by summation
problems of the form (\ref{physics-sums-general}) after summing over
corresponding certificate recurrences (\ref{certific-Feynman}).
A procedure to generate these inhomogeneous recurrences is implemented in the
package {\tt FSums}. E.g., the recurrence satisfied by
the sum $\mathcal{S}'(\ep,N)$, which we denote by $\text{SUM}[N]$,
is returned by
\begin{mma}
\In
|finalRecS| = |InhomogenRec|[|certRecS|, \{\{j_0, 0, -4 + N \},
\{ j_1, 0, -4 - j_0 + N \}\}, N ]\label{MMA:InhomRec}\\

\vspace*{-0.2cm}

\Out
(\ep - 2 N) N |SUM|[N] + (3 - \ep + N) (2 + \ep + 2 N)
   |SUM|[1 + N] ==
\linebreak |FSum|[ (1 + j_0 - N) (-\ep + 2 N) F[N, j_0,
0],\{ \{j_0, 0, -4 + N\} \}] +
\linebreak|FSum|[-2 (\ep - 2 N) F[N, j_0, -3 - j_0 +
N],  \{\{j_0, 0, -4 + N\}\}] +
\linebreak |FSum|[(\ep - 2 N) (2 + j_0 - N) F[1 + N, j_0, 0], \{\{j_0, 0,
-4 +N\}\}] +
\linebreak |FSum|[(6 - \ep - \ep^2 + 2 j_1 + 12N
+ 4 j_1 N) F[1 + N, 0, j_1], \{\{j_1, 0, -4 + N\}\}] +
\linebreak |FSum|[(3 - \ep + N) (2 + \ep + 2 N) F[1 + N, j_0, -3 -
j_0 + N], \{ \{j_0, 0, -3 + N\}\}] +
\linebreak|FSum|[(\ep + \ep^2 + 2 j_0 +
\ep j_0 -  2 N + 2 j_0 N - 4 N^2) F[1 + N, j_0, -3 - j_0 + N], \{\{j_0, 1,
-3 + N\}\}] +
\linebreak |FSum|[-((6 + 2 \ep + 2 j_0 + \ep j_0 + 6 N - \ep
N + 2 j_0 N - 2N^2) F[1 + N, j_0, -3 - j_0 + N]), \{\{j_0, 0, -4 + N \}\} ]; \\
\end{mma}

\vspace*{-0.2cm}

\noindent here \texttt{certRecS} stands for the certificate recurrence~\eqref{Equ:Certificate}.

\section{An efficient approach to find $\varepsilon$-expansions for multi-sums}\label{Sec:EfficientApproach}

Let ${\cal S}(\varepsilon,N)$ be a multi-sum of the form~\eqref{physics-sums-general} with the properties (1)--(5) from  Assumption~\ref{Assum:SumProp} and assume that ${\cal S}(\varepsilon,N)$ has a series expansion~\eqref{Equ:FExp2} for all $N\geq\lambda$ for some $\lambda\in\set N$. Combining the methods of the previous sections we obtain the following general method to compute the first coefficients, say $F_t(N),\dots,F_u(N)$ of~\eqref{Equ:FExp2}.

\medskip

\noindent\textbf{Divide and conquer strategy}

\begin{enumerate}
\item BASE CASE: If $\mathcal{S}(\ep, N)$ has no summation quantifiers, compute the expansion by formulas such as~\eqref{eq:A6} and~\eqref{Equ:Expan}.

\item DIVIDE: As worked out in Section~\ref{Sec:FindRecurrences}, compute a recurrence relation
\begin{equation}\label{Equ:SumRecurrence}
a_0(\varepsilon,N)\mathcal{S}(\ep, N)+\dots+a_d(\varepsilon,N)\mathcal{S}(\ep, N+d)=h(\varepsilon,N)
\end{equation}
with polynomial coefficients
$a_i(\varepsilon,N)\in\set K[\ep,N]$, $a_m(\ep,N)\neq0$ and the right side $h(\varepsilon,N)$ containing a linear
combination of hypergeometric multi-sums each with less than $s+r+1$ summation
quantifiers. Note: In some cases, the sum has to be refined and some ``sore spots'' (again with fewer summation quantifiers) have to be treated separately by calling our method again; see Step~2.1 in Section~\ref{Sec:FindRecurrences}.

\item CONQUER: Apply the strategy recursively to the simpler sums in
$h(\varepsilon,N)$. This results in an expansion of the form
\begin{equation}\label{Equ:hExpansion}
h(\varepsilon,
N)=h_t(N)\varepsilon^{t}+h_{t+1}(N)\varepsilon^{t+1}+\dots+h_u(N)\varepsilon^u+O(\varepsilon^{u+1});
\end{equation}
if the method fails to find the $h_t(N),\dots,h_u(N)$ in terms of indefinite
nested product-sum expressions, STOP.

\item COMBINE: Given~\eqref{Equ:SumRecurrence} with\footnote{Cf.\ Step~2 of Section~\ref{Sec:EffectiveMethod} to see how we deal with the initial values.}~\eqref{Equ:hExpansion},
compute, if possible, the $F_t(N),\dots,F_u(N)$ of~\eqref{Equ:FExp2} in terms of
nested product-sum expressions by executing Algorithm~\FLSR.
\end{enumerate}
\normalsize
\medskip

We illustrate our method with the double sum~\eqref{Ex-75}; internally we transform all the objects in terms of $\Gamma(x)$-functions in order to apply expansion formulas such as~\eqref{eq:A6} and~\eqref{Equ:Expan}. First, we compute the summand recurrence given in~\eqref{Equ:Certificate}. While computing a recurrence for the sum itself, it turns out that we have to refine the summation range, i.e., our computation splits into two problems as given in~\eqref{output-GenSoreSpots}. We continue with the refined double sum~\eqref{Equ:NewSum} and obtain the inhomogeneous recurrence \texttt{finalRecS} given in {\sffamily Out[\ref{MMA:InhomRec}]}. Now we apply recursively our method and compute successively expansions for each of the single sums on the right hand side; see also Example~\ref{Exp:SingleExt1}. Adding all the expansions termwise gives the recurrence
\small
\begin{alignat*}1
(\ep& -2 N) N \mathcal{S}'(\ep,N)-(\ep -N-3) (\ep +2 N+2)
    \mathcal{S}'(\ep,N+1) =
\\&
\frac{18 (2 N^6-3 N^5-8 N^4+13 N^3-4 N+8)}{(N-2) (N-1) N (N+1)
    (N+2)}-\frac{36 (2 N^4+N^3-9 N^2-2 N+4)(-1)^N}{(N-2) (N-1) N (N+1) (N+2)}
\\& + \ep \left[
 \frac{3 (N^8-6 N^7-32 N^6+20 N^5+151 N^4+14 N^3-200 N^2-28
    N+56)}{(N-2) (N-1) N (N+1)^2 (N+2)^2} \right.
\\& + \left. \frac{6 (2 N^6+N^5-14 N^4+9 N^3+40 N^2-22 N-28) (-1)^N}{(N-2) (N-1)
N (N+1)^2 (N+2)^2}+\frac{36 S_1(N)}{N+1}
\right]
\\& + \ep^2 \left[
  \frac{9 S_1(N)^2}{N+1}-\frac{6 (N-5) S_1(N)}{(N+1)^2}
-\frac{N^6 (5
N^3+48 N^2+246 N+568)}{4 (N-1)(N-2)(N+1)^3(N+2)^3}
\right.
\\ &\left(\frac{9 \left(N^4-N^3-4 N^2+4 N+8\right)}{(N-2) (N-1) N
    (N+1) (N+2)}-\frac{18 \left(2 N^4+N^3-9 N^2-2 N+4\right)
    (-1)^N}{(N-2) (N-1) N (N+1) (N+2)}\right)
    S_2(N)
\\& + \left. + \frac{363 N^6+3720 N^5+3672 N^4-5280 N^3-10712 N^2-4592 N-128}{4 N
    (N-1)(N-2)(N+1)^3(N+2)^3}  \right]+O(\ep^3).
\end{alignat*}
\normalsize
Together with its first initial value
$\mathcal{S}'(\ep,4)=\frac{27}{16} -\frac{1}{128}\ep
-\frac{11}{1024}\ep^2$
Algorithm~\FLSR\ computes the series expansion of $\mathcal{S}'(\ep,N)$.
Finally, we compute the expansion of the extra sum $\sum_{j_0=0}^{N-3}{\cal F}(N, j_0, N-3-j_0)$ with our method, and adding this result to our previous computation leads to the final result
\small
\begin{alignat*}1
\mathcal{S}&(\ep,N) = \frac{81 (N^2-3 N+2)}{4 N^2}+ \ep \left[
   \frac{3 (N^4-13 N^3-28 N^2-32 N+24)}{8 N^3 (N+2)}+\frac{9 (N+3)
    S_1(N)}{N (N+1)(N+2)} \right]
\\& + \ep^2 \left[
  \frac{9 (N+3) S_1(N)^2}{4 N (N+1) (N+2)}-\frac{3(5 N^3+36 N^2+37
    N-18) S_1(N)}{4 N (N+1)^2 (N+2)^2} \right. +\frac{9 (N^2+3 N+4)
    S_2(N)}{4 N^2 (N+1) (N+2)}
\\ &- \left.\frac{5 N^6+17 N^5+162 N^4+208
    N^3+592 N^2+240 N-288}{32 N^4 (N+2)^2}
\right]+O(\ep^3).
\end{alignat*}
\normalsize
Similarly, we compute, e.g., the first two coefficients of the expansion of the sum~\eqref{Ex-857}:
\small
\begin{align*}
\mathcal{U}&(\ep,N)=
 \frac{3 (-1)^N \big(N^2+2 N-1\big) S_1(N)}{N (N+1)}-9 (-1)^N+\frac{6 \
(-1)^N S_2(N)}{N}+ \\[-0.1cm]
&\ep\Big[\zeta(2) \big(-\frac{3 (-1)^N (4 N+3)}{2 N}-\frac{3 (-1)^N (3 N+2) \
S_{-1}(N)}{N}+\frac{9}{2 N}\big)+\frac{3 (-1)^N S_1(N)^2}{2 \
(N+1)}\\[-0.1cm]
&+\frac{(-1)^N \big(2 N^4+34 N^3+101 N^2+89 N+2\big) S_1(N)}{2 N (N+1)^2 \
(N+2)}+\frac{3 (-1)^N \big(4 N^2+14 N+13\big)}{(N+1) (N+2)}\\[-0.1cm]
&+\frac{(-1)^N \
\big(-30 N^2-38 N+1\big) S_2(N)}{2 N (N+1)}-\frac{9 (-1)^N (2 N+1)
S_3(N)}{N}+\frac{9 \
S_{-2}(N)}{N}+9 (-1)^N S_{2,1}(N)\\[-0.1cm]
&-\frac{6 (-1)^N (3 N+2) S_{-2}(N) S_{-1}(N)}{N}+\frac{6 (-1)^N (3 N+2) \
S_{-2,-1}(N)}{N}\Big]+O(\ep^2)
\end{align*}
\normalsize
where $\zeta(2)=\sum_{i=1}^\infty\frac{1}{i^2}=\pi^2/6$.

\begin{remark}\label{Remark:MethodConclusion}
In the following we give further comments on our proposed method and provide strategies for using it in the context of the evaluation of Feynman integrals.\\
\textsf{1. A heuristic.} The conquer step turns our procedure into a method and not into an algorithm.
Knowing that there is an expansion of ${\cal S}(\ep,N)$ in terms of indefinite nested sums and products
and plugging this solution into the left hand side of~\eqref{Equ:SumRecurrence} shows that also the right
hand side of~\eqref{Equ:SumRecurrence} can be written in terms of indefinite nested product-sum expressions.
But in our method the right hand side is split into various sub-sums and it is not guaranteed that each
sum on its own is expressible
in terms of indefinite nested product-sum expressions -- only the combination has this particular form.
However, for our input class arising from Feynman-integrals this method always worked.\\
\textsf{2. A hybrid version for speed--ups.} As it turned out,
the bottleneck in our computations is the task to compute a recurrence of the
form~\eqref{Equ:SumRecurrence} with the \texttt{MultiSum}-package. To be more precise,
in several cases we succeeded in finding a structure set $S$ with the corresponding degree
bounds for the polynomial coefficients, but we failed to determine the summand
recurrence~\eqref{non-kfree-Feynman} explicitly, since the underlying linear system was too large to solve.
For such situations, we dropped, e.g., the outermost summation quantifier, say $\sum_{\sigma_1=p_1}^{\infty}$ and searched for a recurrence in $\sigma_1$; in particular the variable $N$ was put in the base field $\set K$. In this simpler form, we succeeded in finding a recurrence. Next, we computed the initial values (in terms of $N$) by using another round of our method. With this input, Algorithm~\FLSR\ found an expansion with coefficients in terms of $F_{t}(N,\sigma),F_{t+1}(N,\sigma),\dots,F_{u}(N,\sigma)$. To this end, we applied the infinite sum

\vspace*{-0.3cm}

\begin{equation}\label{Equ:LastSum}
\sum_{\sigma_1=p_1}^{\infty}F_i(\sigma,N)
\end{equation}
to the coefficients $F_{i}(N,\sigma)$ and simplified these expressions further by the techniques described in~\cite{HYP2}. In various situations, it turned out that this hybrid technique was preferable to computing a pure recurrence in $N$ or just simplifying the expressions~\eqref{eq:A5} by using the methods given in~\cite{HYP2}.\\
\textsf{3. Asymptotic expansions for infinite expressions.} As mentioned in Remark~\ref{Remark:SplittingSums} we obtained also sums of the form~\eqref{physics-sums-general} which could be defined only by considering a truncated version of the infinite sums. For such cases we computed the coefficients $F_i(\sigma,N)$ as above and considered --instead of~\eqref{Equ:LastSum}-- the expressions
$\sum_{\sigma=0}^{a}F_i(\sigma,N)$
for large values $a$. To be more precise, we computed asymptotic expansions for all these sums and combined them to one asymptotic expansion in $a$. In this final form all the expressions canceled which were not defined when performing $a\to\infty$ and we ended up with the correct $F_i(N)$.\\
\textsf{4. Dealing with several infinite sums.} In all our computations only a single infinite sum arose. In principle, our method works also in the case when there are several such sums. However, in order to set up the recurrence in Section~\ref{Sec:FindRecurrences}, we need additional properties such as~\eqref{Equ:SumDefined} for the multivariate case. If such properties are not available, we propose two strategies: \textsf{4.1} Drop some (or all) of the infinite sums and proceed as explained in point 2 of our remark. \textsf{4.2} Set up the recurrence with formal sums and expand the sums on the right hand side: here one can either use the strategies as described in Step~4 of Section~\ref{Sec:Physic} (in particular, if asymptotic expansions have to be computed), or one can proceed with the method of this section whenever the sum is analytically well defined.
\end{remark}

\section{Conclusion}\label{Sec:Conclusion}

We presented a general framework that enables one to compute the first coefficients $F_i(N)$ of the Laurent expansion of a given Feynman parameter integral, whenever the $F_i(N)$ are expressible in terms of indefinite nested product-sum expressions. Namely, starting from such integrals, we described a symbolic approach to obtain a multi-sum representation over hypergeometric terms. Given this representation, we developed symbolic summation tools to extract these coefficients from its sum representation. In order to tackle this problem, Wegschaider's \texttt{MultiSum} package has been enhanced with Stan's package \texttt{FSum} that handles sums which do not satisfy finite support
conditions. Moreover, given a recurrence relation of the form~\eqref{Equ:ExpansionEqu} together with initial values, we used Schneider's recurrence solver that decides constructively, if the first coefficients of the formal Laurent series solution are expressible in terms of indefinite nested product-sum expressions.

In order to fit the input class of hypergeometric multi-sum packages, we split the sums at the price of possible divergencies. We overcame this situation by combining our new methods with other tools described, e.g., in~\cite{HYP2}; see Remark~\ref{Remark:MethodConclusion}. Further analysis of the introduced method should lead to a uniform approach that can handle in one stroke also solutions in terms of asymptotic expansions.

The described summation tools assisted in the task to compute
two- and simple three-loop diagrams, which occurred
in the calculation of the massive Wilson coefficients for deep-inelastic scattering; see~\cite{HYP2,HWILS1,HWILS2,HWILS3,sums-comp-description}. We are curious to see whether these new summation technologies find their application also in other fields of research.

\small


\end{document}